\DeclareMathOperator*{\argmin}{arg\,min}
\newtheorem{theorem}{Theorem}
\begin{document}

\title{Quantum Approximation Optimization Algorithm for the Trellis based Viterbi Decoding of Classical Error Correcting Codes}
\author{Mainak Bhattacharyya}
\email{mainak23@iiserb.ac.in}
\orcid{https://orcid.org/0009-0001-5761-1289}
\affiliation{Department of Electrical Engineering and Computer Science, Indian Institute of Science Education and Research Bhopal, India}

\author{Ankur Raina}
\email{ankur@iiserb.ac.in}
\orcid{https://orcid.org/0000-0002-9022-3595}
\affiliation{Department of Electrical Engineering and Computer Science, Indian Institute of Science Education and Research Bhopal, India}
\maketitle
\begin{abstract}
We construct a hybrid quantum-classical Viterbi decoder for the classical error-correcting codes.
Viterbi decoding is a trellis-based procedure for maximum likelihood decoding of classical error-correcting codes.
In this article, we demonstrate that the quantum approximate optimization algorithm can find any path on the trellis with the minimum Hamming distance relative to the received erroneous vector.
We construct a generalized method to map the Viterbi decoding problem into optimization of a parameterized quantum circuit for any classical linear block code.
Also, we propose a uniform parameter optimization strategy to optimize the parameterized quantum circuit using a classical optimizer.
We observe that the proposed method efficiently generates low-depth trainable parameterized quantum circuits. 
Our approach makes the hybrid decoder more efficient than previous attempts at making quantum Viterbi algorithm.
We show that using uniform parameter optimization, we obtain parameters more efficiently for the parameterized quantum circuit than previously used methods such as random sampling and fixing the parameters.
\end{abstract}
\section{Introduction}
Classical error correcting codes are designed to tackle the noise occurring in communication and data storage channels. 
Encoding in these codes corresponds to introducing redundancy into the information bits. 
The resultant bits form the codeword.
Linear block codes, convolution codes are some widely known error correcting codes.
A good code consisting of codewords must have an efficient decoding technique alongside the encoding process to render the error correction efficient.
After receiving an erroneous signal/vector from the communication channel, the decoder tries to find the most likely transmitted codeword or the error pattern,
% But often the decoding problem is hard. For example the maximum-likelihood decoding and the syndrome decoding requires exhausting search throughout the code-space. It is thus interesting to find better algorithms, which can reduce the time complexity of these decoding processes. 
which is often considered as a hard problem.
A standard but classically exhaustive decoding process is the maximum likelihood (ML) decoding.
% Convolution codes and linear block codes are some examples of classical error correcting codes that can be decoded using ML decoding.
Given any received word, the ML decoding calculates the probabilities of all possible codewords.
The codeword having the maximum probability is decoded as the best candidate in the codespace (i.e. the set of codewords).
The searching process in the codespace using the maximum likelihood approach has a time complexity exponential to the number of bits $n$, i.e. $\mathcal{O}(2^n)$ and it is NP-hard for a general class of linear codes \cite{guruswami2005maximum, berlekamp}.
ML decoding can be cast as a form of syndrome decoding or the Viterbi decoding and it is proven to be NP-complete by Berlekamp \emph{et al}. in 1978 \cite{berlekamp}.
This suggests there exists no algorithm to solve the problem in polynomial time in the number of bits of the codeword.
In 1990, Bruck and Naor showed that in general, the knowledge of the code does not improve the decoding complexity \cite{bruck}.
The authors argued that if there exists an efficient decoder, one can process the code defined by it's parity check matrix to find an efficient decoder in finite ammount of processing.
As in general practice the code to be used might not be known in advance, the prior mentioned workflow is referred as the preprocessing.
Their sounding contribution was to show that a preprocessing for an unbounded time does not guarantee an efficient decoder.
This suggests that there are some linear block codes, which cannot be decoded efficiently.
A natural direction to look for is the quantum advantage in classical decoding problems.\\\\
During 1992-1997, quantum algorithms showed the promise to provide speedup over the corresponding classical counterparts for certain problems \cite{deutsch,bernstein_vazzirani,simon}. 
One of them introduced by LK Grover in 1996 \cite{grover}, is considered to be one of the prime quantum algorithms showing quadratic speed up compared to the classical algorithms of similar interest having worst case time complexity, which is exponential in the size of the input.
The worst case time complexity refers to the case, where an algorithm executes maximum number of operations on a fixed input size to fulfil the task.
This leads to an analysis on the upper bound of the running time of an algorithm.
Grover's algorithm uses oracle based approach to search for an item in a disordered data set with quadratic speedup. 
Jung \emph{et al.} used the inspiration from Grover's search to achieve a quadratic speed up over classical ML decoding \cite{qml}.
Quadratic speed up is also achieved for the NP-complete syndrome decoding problem using a modified Grover's search algorithm \cite{syndrome_us}. 
The authors argued that the Grover inspired algorithm significantly reduces the time complexity of decoding than the classically exponential one.
The problem of finding the path with minimum path metric, for example in Viterbi decoding can be addressed by a quantum minima finding algorithm given by D\"urr and H{\o}yer \cite{durr_hoyer}. 
This can bring a quadratic speed up in the search process.
In general this type of algorithms are inefficient due to their vulnerability to NISQ devices \cite{preskil}.
Therefore, we look for other available alternatives, which are most suitable for implementation in the NISQ devices \cite{bharti2022noisy,mcclean2016theory}.\\\\
Variational quantum algorithms are an efficient alternative for such cases \cite{cerezo2021variational,callison2022hybrid}.
The quantum approximation optimization algorithm (QAOA) primarily focuses to solve the combinatorial optimization problems.
It is used to find an approximate solution to problems like MAX-CUT problem \cite{qaoa}, Travelling Salesman Problem \cite{tsp} and more \cite{awasthi2023quantum, moussa2022unsupervised, streif2021beating, borle2021quantum, vikstaal2020applying}.
It is thought to be one of the best candidate for NISQ devices, due to its efficiency for circuits with lower depths \cite{farhi2016quantum}.
Quantum approximation optimization algorithm (QAOA) has previously been explored to study the quasi maximum likelihood (ML) decoding of the classical channel codes \cite{matsumine2019channel}, where the authors map the quasi ML decoding problem into ground state determination of an Ising Hamiltonian and used a simple level-$1$ QAOA to study the success rate of the QAOA based algorithm.
In this work we provide a QAOA based approach towards the trellis based decoding of classical codes.
We provide a general map of cost Hamiltonian corresponding to a cost function for the Viterbi decoding problem.
We also show that the depth of a general mixer unitary for the QAOA based Viterbi decoder, implicitly depends on the minimum Hamming distance $d$ and the number of bits in the codeword.
Therefore, the circuit depth of the PQC still depends on the structure of the classical code itself.
However, we argue that QAOA still has better efficiency than the specialized iterative amplitude amplification (IAA) based algorithm previously used to study a variant of quantum Viterbi algorithm for the decoding of classical convolution codes \cite{qva}.
In their work Grice and Meyer showed the IAA based algorithm followed by an encoding of path probabilities into the phases of the superposition states increases the probability of the decoded state.
This has a quadratic depth dependence on fanout and the number of bits used to represent the state.
This results in an overhead, which is inefficient for the current quantum computing hardware.
In Section \ref{sec:Performance of the Hybrid Decoder} we show that the low depth performance of QAOA based Viterbi decoder is very accurate using a uniform parameter optimization-based structured PQC.
Therefore, the hybrid QAOA-Viterbi decoder is indicated as a potentially efficient substitute for the previous quantum Viterbi decoder.
% They use the quantum amplitude amplification technique followed by an encoding of path probabilities into the phases to increase the probability of the decoded state.
% The algorithm requires quadratic iterations of phase marking and state amplitude amplification and thus renders the process inefficient.
% Thus, the quantum circuits promised by the above algorithms are vulnerable to errors in the NISQ devices \cite{preskil}.
We thus, give a Quantum-classical hybrid approach for the Viterbi decoding, inspired from the QAOA introduced by Farhi \emph{et al}. in 2014 \cite{qaoa}.\\\\
The success of finding a good approximate solution using QAOA depends on finding a set of good parameters for the PQC.
The choice of good parameters depends on properly choosing the initial parameters and cost functions \cite{cerezo}, otherwise due to the presence of flat optimization landscape it is often very hard to find the global minima.
In their work, Bittel and Kliesch showed that the classical optimization of the parameterized quantum circuits is NP hard \cite{np_hard}. 
To address the issue, we suggest a simple uniform parameter optimization (UPO) technique to determine good parameters for the PQC circuit. 
We observe that the proposed heuristic approach towards optimization is better performant than both the random initialization method and fixed parameter optimization (FPO) strategy suggested by Lee \emph{et al}. \cite{fixed_parameter}.
Although a proper runtime complexity of the decoder is correlated with the optimization method used,
% Classically the complexity of evaluating the cost of some problem is exponential in the number of bits.
QAOA is expected to approach a problem with better runtime complexity using the PQC and the classical optimizer.
In spite of many obstacles, open studies are going on to check whether it is the general case or not.
Many reported instances show that QAOA might still produce a quantum advantage over the classical counterparts
\cite{crooks2018performance,blekos2023review,an2022quantum}.
In similar direction, Zhang \emph{et al.} demonstrated that QAOA provides exponential acceleration for the Minimum Vertex Cover (MVC) problem over the classical competitive decision algorithm \cite{zhang2022applying}.
The unstructured search problem presented using QAOA by Jiang \emph{et. al.} provides a Grover-like quadratic speed-up \cite{jiang2017near}.
Recently J. Golden \emph{et al.} showed numerical evidence of exponential speed-up for QAOA over Grover type unstructured search in constrained optimization problems \cite{golden2022evidence}.
We explore all the above scenarios and show improvement heuristically in our algorithm.
\\\\
The paper is organized as follows.
We briefly discuss the trellis-based Viterbi decoding for classical linear block codes in Section \ref{sec:Viterbi decoding of classical codes}. 
We present the workflow of QAOA in Section \ref{sec:qaoa}.
We formally introduce the quantum-classical (hybrid) Viterbi decoder by encoding the Viterbi decoding problem for the QAOA in Section \ref{sec:Quantum Classical Hybrid Viterbi Decoder}. 
We show a general method for constructing the mixer Hamiltonian depends on the minimum Hamming distance of the code in Theorem \ref{theorem_mixer}.
We propose a uniform parameter optimization (UPO) strategy in Section \ref{sec:UPO} and analyze the performance of decoding for the linear block codes.
% Finally, we extend the application of the hybrid Viterbi decoder for the classical convolution codes in Section \ref{sec:Application to Convolution Codes}. 
We perform all the simulations of the algorithm using $qiskit$ and $qasm$ simulator of IBM Quantum Experience.
\makenomenclature
\renewcommand{\nomname}{List of notations}
\renewcommand{\nompreamble}{Here, we describe certain symbols we use frequently throughout the paper.}
\nomenclature{\(\mathbf{G}\)}{Generator matrix of a linear block code.}
\nomenclature{\(\mathbf{H}\)}{Parity check matrix of a linear block code.}
\nomenclature{\(f(x)\)}{Cost function dependent on codeword $x$.}
\nomenclature{\(H_f\)}{Cost function Hamiltonian.}
\nomenclature{\(U_f\)}{Cost unitary}
\nomenclature{\(H_m\)}{Mixer Hamiltonian.}
\nomenclature{\(U_m\)}{Mixer/ Mixer unitary}
\nomenclature{\(\ket{\psi_{\mathrm{in}}}\)}{Initial guess of the solution state, and must be a member of the codespace.}
\nomenclature{\(c\)}{Codeword belonging to $\mathcal{C}$.}
\nomenclature{\(y\)}{received erroneous vector.}
\nomenclature{\(w_H(c)\)}{Hamming weight of the codeword.}
\nomenclature{\(d\)}{minimum Hamming distance of the code.}
\nomenclature{\(d(c,y)\)}{Hamming distance between two vectors $c$ and $y$.}
\nomenclature{\(c_j\)}{$j^{th}$ bit of a codeword.}
\nomenclature{\(\mathcal{C}\)}{Codespace}
\nomenclature{\(U_g\)}{Unitary for creating uniform superposition of all valid trellis paths.}
\nomenclature{\(f\)}{Array for storing the expectation of cost function value at each sampling.}
\nomenclature{\(t_\beta, t_\gamma\)}{Optimized parameters for each sampling instances.}
\nomenclature{\(q\)}{Number of samples for opting out best parameters.}
\nomenclature{\(p_1, p_2\)}{Optimized Parameters.}
\nomenclature{\(p\)}{Number of unitary layers applied in the parameterized quantum circuit in QAOA.}
\nomenclature{\(\alpha\)}{Approximation ratio with respect to the minimum value of cost function.}
\nomenclature{\(\mathrm{X}\)}{Pauli-$\mathrm{X}$ gate.}
\nomenclature{\(\mathrm{Z}\)}{Pauli-$\mathrm{Z}$ gate.}
\nomenclature{\(\mathrm{H}\)}{$\mathrm{Hadamard}$ gate.}
\nomenclature{\(0_{1 \times n}\)}{Row vector of $n$ zeros.}
\nomenclature{\(n\)}{Number of bits in the codeword.}
\nomenclature{\(k\)}{Number of bits in the information/ message.}
\nomenclature{\(m_i\)}{Number of occurrence of $i^{th}$ state upon measuring the PQC.}
\printnomenclature
\section{Viterbi decoding of classical codes}
\label{sec:Viterbi decoding of classical codes}
Classical error correcting codes are designed to encode a message $m \in \{0,1\}^k$ into a codeword $c \in \{0,1\}^n$ based on a generator matrix $\mathbf{G}$.
For example, linear block codes encode $k$ bits of message $m$ into $n$ bits of codeword $c$ according to $c = m\mathbf{G}$.
The device which physically converts a set of input message bits, to the codewords of the error-correcting code is called the encoder.
The codewords of linear block code have systematic form and represent information bits followed by redundant bits. 
Thus, the codewords form a block-like structure.
The classical codes are denoted by $[n,k,d]$, where $d$ is the minimum Hamming distance of the code.
The minimum number of bits in which any two codeword in the codespace differs is denoted by $d$.
Further the parameter $d$ determines the error correcting capability of the code.
For instance a classical code with Hamming distance $d$ can correct errors upto Hamming distance $\frac{d-1}{2}$ with respect to the codewords.
The decoding problem of these classical codes are extensively based on the parity check matrix $\mathbf{H}$ of the code, which has a dimension of $n-k \times n$.
The members of the codespace $c \in \mathcal{C}$ satisfy $c\mathbf{H}^{T} = 0_{1 \times {(n-k)}}$, where the codeword $c$ is a row vector of dimensions $1 \times n$.
The codespace is thus defined as
\begin{align}
    \mathcal{C} = \{c: c\mathbf{H}^{T} = 0_{1 \times {(n-k)}}\}.
\end{align}
The cardinality of the codespace $\mathcal{C}$ is $2^k$.
% equal to the total number of possible messages of $k$ bits.
During the transmission, channel errors $\eta \in \{0,1\}^n$ may affect the codeword, such that the codeword $c$ turns into an erroneous vector $y \in \{0,1\}^n$:
\begin{align}
    y = c + \eta.
\end{align}
A maximum likelihood (ML) decoder for the code tries to find the most probable error pattern occurred during the transmission, upon receiving the erroneous vector $y$.
Syndrome decoding is one of the ML decoding methods that takes syndrome $s \in \{0,1\}^{n-k}$ as an input and determines the minimum Hamming weight solution to the following equation:
% which outputs the most probable error pattern $\eta$ as the minimum weight solution of the syndrome equation, with the syndrome $s$ as a input.
% The syndrome equation is defined as
\begin{align}
    \label{eq:ss}
    s = y\mathbf{H}^T = \eta \mathbf{H}^T,
\end{align}
where, the parity check matrix is known a-priori.
In simple terms a syndrome decoder obtains a minimum weight $\eta$ satisfying Eq. \ref{eq:ss}.\\\\
Viterbi decoding is another ML decoder for decoding the codeword $c$ after receiving the erroneous vector $y$.
For the decoding of convolution codes Viterbi algorithm, introduced by Viterbi is based on the maximum a posteriori probability calculations \cite{viterbi}.
The general problem is given a erroneous vector, the decoder determines the codeword having the maximum a posteriori probability.
% The decoder is essentially a trellis-based decoding.
A trellis based approach of Viterbi decoding is thoroughly discussed by Forney \cite{forney_viterbi}.
For each time instant, the trellis contains nodes describing the state of the encoder and from each node there are branches.
The branches depend on the current input to the encoder.
Any path traversing through the trellis is a valid codeword. 
% The trellis decoding problem is also a maximum likelihood decoding.
Given the received vector, the shortest trellis path (i.e. with the maximum aposteriori probability) is determined as the decoded codeword.
The path metric refers to the total Hamming distance between a path of the trellis and the received erroneous vector. 
For convolution codes, construction of the trellis is fairly simple and is discussed by Lin and Costello in their book \cite{lin}.
In a paper by Bahl. L. \emph{et al.} \cite{bcjr}, the trellis construction for linear block codes was introduced, using the parity check matrix $\mathbf{H}$ of the corresponding $[n,k,d]$ linear block code.
Any trellis based decoding might as well be applied to the linear block codes.
% For a linear block code $[n,k]$ the trellis can be constructed using the parity check matrix $(\mathbf{H})$ of the corresponding linear block code \cite{bcjr}.
The state at time $t$ for a linear block encoder is given by \begin{align}
\label{eq:lbc_trellis}
    S_t = S_{t-1} + c_th_t,
\end{align}
where $t = 1,2,......,n$. $c_t$ is the $t^\mathrm{th}$ bit of the codeword and $h_t$ is the $t^\mathrm{th}$ column vector of parity check matrix $\mathbf{H}$. 
Each state $S_t$ is a vector of length $(n-k)$. 
Thus, the total number of all possible states are $2^{n-k}$. 
\begin{figure}[ht!]
\advance\leftskip-1.2cm
\begin{tikzpicture}[node distance={2.8 cm}, main/.style = {draw, circle,fill=cyan,thick}]

\node[main] (1) {$\mathbf{000}$};
\node[main] (2) [right of=1] {$\mathbf{000}$};
\node[main] (3) [below of=2] {$\mathbf{001}$};
\node[main] (4) [below of=3] {$\mathbf{010}$};
\node[main] (5) [below of=4] {$\mathbf{011}$};
\node[main] (6) [below of=5] {$\mathbf{100}$};
\node[main] (7) [below of=6] {$\mathbf{101}$};
\node[main] (8) [below of =7] {$\mathbf{110}$};
\node[main] (9) [below of=8] {$\mathbf{111}$};
% t=2 layer
\node[main] (10)[right of =2] {$\mathbf{000}$};
\node[main] (11) [below of=10] {$\mathbf{001}$};
\node[main] (12) [below of=11] {$\mathbf{010}$};
\node[main] (13) [below of=12] {$\mathbf{011}$};
\node[main] (14) [below of=13] {$\mathbf{100}$};
\node[main] (15) [below of=14] {$\mathbf{101}$};
\node[main] (16) [below of =15] {$\mathbf{110}$};
\node[main] (17) [below of =16] {$\mathbf{111}$};
% t=3 layer
\node[main] (18)[right of =10] {$\mathbf{000}$};
\node[main] (19) [below of=18] {$\mathbf{001}$};
\node[main] (20) [below of=19] {$\mathbf{010}$};
\node[main] (21) [below of=20] {$\mathbf{011}$};
\node[main] (22) [below of=21] {$\mathbf{100}$};
\node[main] (23) [below of=22] {$\mathbf{101}$};
\node[main] (24) [below of =23] {$\mathbf{110}$};
\node[main] (25) [below of =24] {$\mathbf{111}$};
% t=4 layer
\node[main] (26)[right of =18] {$\mathbf{000}$};
\node[main] (27) [below of=26] {$\mathbf{001}$};
\node[main] (28) [below of=27] {$\mathbf{010}$};
\node[main] (29) [below of=28] {$\mathbf{011}$};
\node[main] (30) [below of=29] {$\mathbf{100}$};
\node[main] (31) [below of=30] {$\mathbf{101}$};
\node[main] (32) [below of =31] {$\mathbf{110}$};
\node[main] (33) [below of =32] {$\mathbf{111}$};
% t=5 layer
\node[main] (34)[right of =26] {$\mathbf{000}$};
\node[main] (35) [below of=34] {$\mathbf{001}$};
% \node[main] (36) [below of=35] {$\010$};
% \node[main] (37) [below of=36] {$011$};
% \node[main] (38) [below of=37] {$100$};
% \node[main] (39) [below of=38] {$101$};
% \node[main] (40) [below of =39] {$110$};
% \node[main] (41) [below of =40] {$111$};
% t=6 layer
\node[main] (42)[right of =34] {$\mathbf{000}$};

\draw[->,very thick] (1) -- node[midway,above] {\textbf{0}} (2);
\draw[->,very thick] (1) -- node[midway,right = 0.2 cm,above] {\textbf{1}} (5);
\draw[->,very thick] (2) -- node[midway,above] {\textbf{0}} (10);
\draw[->,very thick] (2) -- node[midway,right = 0.2 cm,above] {\textbf{1}} (15);
\draw[->,very thick] (5) -- node[near start,above] {\textbf{0}} (13);
\draw[->,very thick] (5) -- node[midway,right = 0.2 cm,above] {\textbf{1}} (16);
\draw[->,very thick] (10) -- node[midway,above] {\textbf{0}} (18);
\draw[->,very thick] (10) -- node[near start,right = 0.2 cm,above] {\textbf{1}} (24);
\draw[->,very thick] (13) -- node[near start,above] {\textbf{0}} (21);
\draw[->,very thick] (13) -- node[near start,left = 0.2 cm,below] {\textbf{1}} (23);
\draw[->,very thick] (15) -- node[midway,above] {\textbf{0}} (23);
\draw[->,very thick] (15) -- node[near end,right = 0.2 cm,below] {\textbf{1}} (21);
\draw[->,very thick] (16) -- node[midway,above] {\textbf{0}} (24);
\draw[->,very thick] (16) -- node[near end,right = 0.3 cm,above] {\textbf{1}} (18);
\draw[->,very thick] (18) -- node[midway,above] {\textbf{0}} (26);
% \draw[->] (18) -- node[midway,above,sloped] {\textbf{1}} (30);
% \draw[->] (21) -- node[midway,above,sloped] {\textbf{1}} (33);
\draw[->,very thick] (21) -- node[near start,above] {\textbf{0}} (29);
\draw[->,very thick] (23) -- node[near end,left = 0.2 cm,above] {\textbf{1}} (27);
% \draw[->] (23) -- node[midway,above] {\textbf{0}} (31);
\draw[->,very thick] (24) -- node[midway,right = 0.3 cm,above] {\textbf{1}} (28);
% \draw[->] (24) -- node[midway,above] {\textbf{0}} (32);
\draw[->,very thick] (26) -- node[midway,above] {\textbf{0}} (34);

\draw[->,very thick] (27) -- node[near start,above] {\textbf{0}} (35);
\draw[->,very thick] (29) -- node[midway,right = 0.2 cm,below] {\textbf{1}} (35);

\draw[->,very thick] (28) -- node[near end,right = 0.2 cm,below] {\textbf{1}} (34);
\draw[->,very thick] (34) -- node[midway,above] {\textbf{0}} (42);
\draw[->,very thick] (35) -- node[midway,right = 0.2 cm,below] {\textbf{1}} (42);
\end{tikzpicture}
\caption{Trellis of [6,3,3] Linear Block Code Encoder. The encoder initializes 
 and terminates in the null state. There are total $8$ possible paths shown in the trellis, which satisfy the initialization and termination condition. From each state the encoder can jump into any of the two possible states depending upon the input $c_i$ at $i^\mathrm{th}$ time instant.}
\label{fig:633_trellis}
\end{figure}
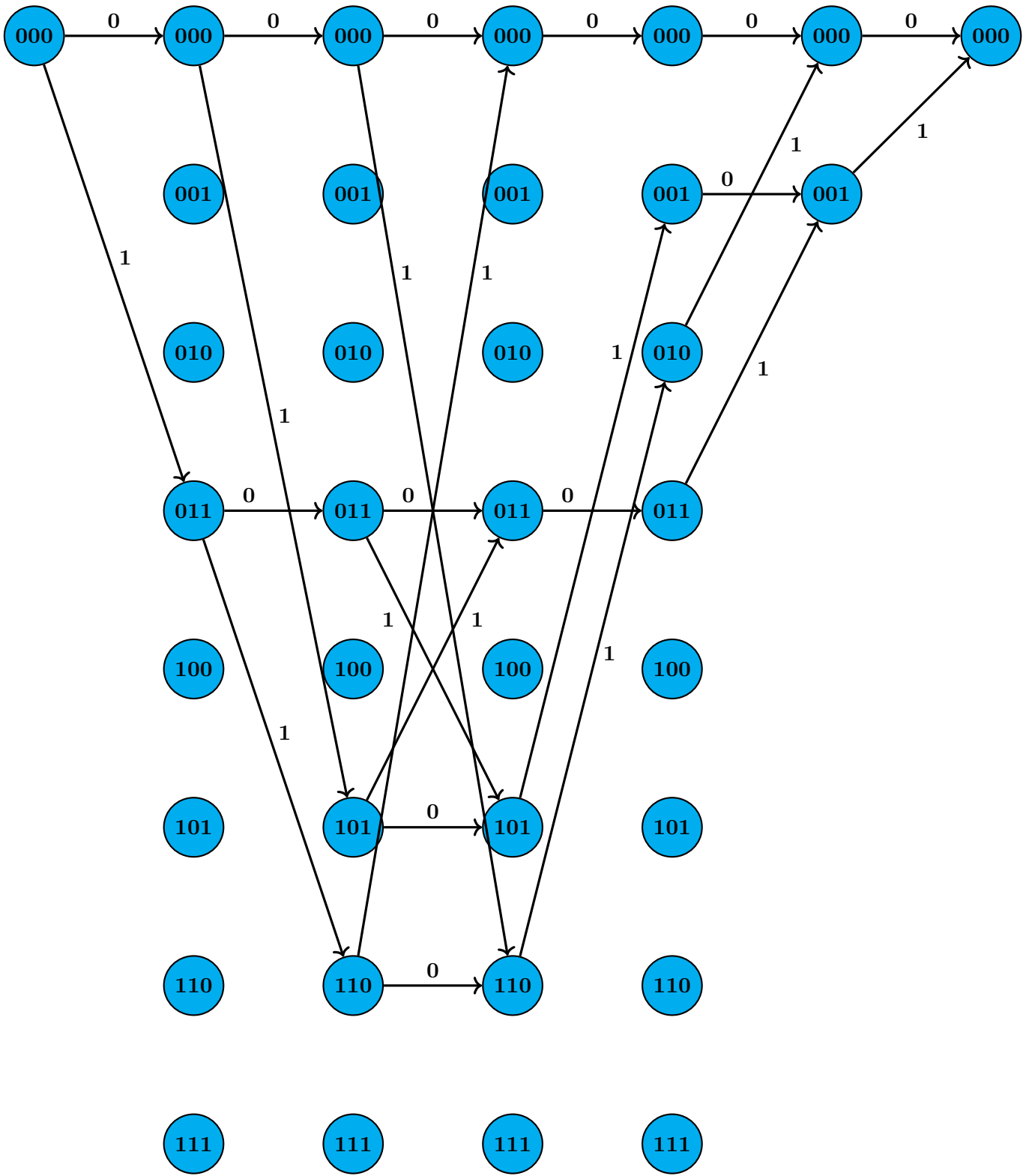
Let us assume the encoder is initialized to an all-zero state or the null state. 
Thus, Eq. \ref{eq:lbc_trellis} becomes \begin{align}\label{eq:labc_trellis2}
    S_t = \sum_{i=1}^{t}c_ih_i.
\end{align} 
Consider the parity check matrix $\mathbf{H}$ of a $[6,3,3]$ code, which we use primarily in this paper to demonstrate the algorithm: \begin{align*}
    \mathbf{H} = \begin{bmatrix}
0&1&1&1&0&0\\
1&0&1&0&1&0\\
1&1&0&0&0&1\\
\end{bmatrix}.
\end{align*}
Based on the $\mathbf{H}$, we show the trellis of the [6,3,3] code in Fig. \ref{fig:633_trellis}.
We emphasize that, due to the null state initialization, the final state of the encoder should also be a null state. 
This constraint of the trellis determines all the valid paths.
We discard the branches which do not end up in the final null state, as they do not produce a valid codeword. 
% From each node, there are two possible branches for each $c_i$, where $c_i \in \{0,1\}$. 
% The trellis is irregular as the linear block encoder is a time-varying Markov source. 
% Although a state diagram is not possible for the linear block code encoder, trellis can be constructed as we just described. 
On this trellis, the Viterbi decoder performs ML decoding of linear block codes by evaluating the partial sums (Hamming distance between $i^\mathrm{th}$ trellis branch and $y_i$) of each branch in the trellis.  
The total path metric of a valid path is the total sum of all the partial sums and the path with the least total path metric is the decoded codeword.
Thus, this path thus refers to the codeword having the least Hamming distance with respect to the received erroneous vector.
Mathematically, the Viterbi decoding is expressed as finding the codeword $c^*$:
\begin{align}
    c^{*} = \argmin_{c\in \mathcal{C}}d(c,y),
\end{align}
where $r$ is the received erroneous vector and $d(c,y)$ is the Hamming distance between codeword and the received vector.
We refer to this quantity as the cost function $f(c)$, 
% of our problem in Section \ref{sec:Quantum Approximation Optimization Algorithm} and similar instances like Eq. \ref{eq:cost_func} of Section \ref{subsec:Construction of the Cost Hamiltonian},
where, `$c$' represents the variable for various paths in the trellis.
% Each possible path through the trellis is a valid codeword.\\
We use the QAOA introduced by Farhi \emph{et al}. \cite{qaoa} to find the paths with least path metric in the trellis.
The cost Hamiltonian of QAOA is a map from the cost function $d(c,y)$ and further a mixer Hamiltonian is defined in Section \ref{sec:viterbi_mix}.
We show that the construction of the mixer Hamiltonian ensures preservation of the codespace $\mathcal{C}$ and depends on the parameter $d$.
We now briefly introduce the variational principle and QAOA used in the algorithm we propose.
\section{The Variational Principle and Quantum Approximate Optimization}
\label{sec:qaoa}
The variational principle has its applicability in different branches of science with a few subtle aspects.
Particularly, it is well known for the study of calculus of variations in Physics and Mathematics.
A few examples include Fermat's principle of least time, optical path near event horizon of a black hole, brachistochrone problem uses variational methods \cite{arfken2011mathematical}.
For a quantum system, sometimes the exact solution of the system Hamiltonian is not known analytically.
In those cases, one can use variational principle.
% The variational principle of Quantum Mechanics 
It is an approximation method for determining the ground state of the system.
The ground state is approximated by a parameterized educated guess called ansatz \cite{griffiths2017introduction}.
The expectation value of the system Hamiltonian is the cost, which is minimized by varying the parameters of the ansatz.
The upper bound for the ground state energy $E_{g}$ of the quantum system with Hamiltonian $H$ is 
\begin{align}
    \label{eq:variational}
    E_g\leq \braket{\psi(\theta)|H|\psi(\theta)},
\end{align} 
where $\psi(\theta)$ is the ansatz. 
The parameter $\theta$ is optimized to keep the expectation value in Eq. \ref{eq:variational} as close to the actual ground state energy.
This idea is extended in the hybrid quantum-classical algorithm, where the expectation value of the cost Hamiltonian is extracted from repeated calls on a quantum circuit and the optimization of parameters is done using a classical optimizer.
The QAOA is a variational quantum algorithm, where the ansatz is prepared using the PQC, which represents a circuit of time evolving unitary gates of the Hamiltonian corresponding to the optimization problem and we optimize the parameters of the PQC using a classical optimizer.
There, the parameter $\theta$ is embedded in the unitary corresponding to the problem Hamiltonians, which is coined as ``problem-inspired ansatze" by Cerezo \emph{et al.} \cite{cerezo2021variational}.
We now formally discuss the workflow of QAOA.
\subsection{Quantum Approximation Optimization Algorithm}
\label{sec:Quantum Approximation Optimization Algorithm}
Primary objective of QAOA is to encode an optimization problem into a problem of finding the ground state of a related cost Hamiltonian.
This makes it more laudable for the case of Viterbi decoding.
The ground state of a Hamiltonian refers to the state with minimum eigenvalue.
QAOA utilizes both quantum and classical methods to obtain the ground state of an observable Hamiltonian.
The optimization problem is generally expressed in terms of a Boolean or pseudo-Boolean function.
The QAOA uses the PQC to evaluate this cost function by estimating the expectation value of the corresponding Hamiltonian. 
Then, a classical optimizer is used to tweak the parameters so that we achieve minimum expectation value of the Hamiltonian.
A general schematic representation of QAOA is shown in Fig. \ref{fig:qaoa}.
Encoding the cost functions of any problem into a diagonal Hamiltonian is the first key step of QAOA. 
The cost function, in general is a map $f: \{0,1\}^n \longrightarrow \mathbb{R}$ defined by
\begin{align}
    \label{eq:cost_fn}
    f(c) = \sum_{i=1}^{\alpha} w_iC_i(c),
\end{align}
where $c$ is a $n$ bit binary string and $w_i\in \mathbb{R}$, and $C_i(c)$ is a Boolean function that takes the $n$ bit string as input and gives a binary output, i.e. $C_i: \{0,1\}^n \longrightarrow \{0,1\}$.
We note that $C_i(c)$ acts locally on some of the bits of the $n$-bit string $c$.
The representation in Eq. \ref{eq:cost_fn} suggests that the total cost function consists of a weighted sum of several local Boolean functions, where $\alpha$ is the total number of such local terms.
The locality of each Boolean function is the locality of its acting domain. 
This weighted sum representation of the cost function $f(c)$ is a pseudo-Boolean function.
For the problem of Viterbi decoding, we choose a pseudo-Boolean cost function, which evaluates the Hamming distance between two vectors.\\
The diagonal Hamiltonian corresponding to the cost function $f(c)$ should be constructed such that
\begin{align}
    \label{eq:cost_hamiltonian_map}
    H_f\ket{x} = f(c)\ket{x},
\end{align}
where $\ket{c}$ is eigenstate of $H_f$ with the eigenvalue $f(c)$.
This $H_f$ is called the cost Hamiltonian.
Eigenvalue of the cost Hamiltonian is the real value of the cost function defined in Eq. \ref{eq:cost_fn}. 
The expectation value of this Hamiltonian $\braket{H_f}$ refers to the cost that must be minimized.
In the PQC, time evolution of the cost Hamiltonian is defined as
\begin{align}
    U_f(\gamma) = e^{-i \gamma H_f},
\end{align}
where we call it the cost unitary, with $\gamma$ as the evolution parameter of $H_f$.\\
A single layer of the PQC contains another unitary along with the $U_f(\gamma)$. It is called the mixer and is defined as 
\begin{align}
    \label{eq:mixer_qaoa}
    U_m(\beta) = e^{-i \beta H_m},
\end{align}
where, $H_m$ is the mixer Hamiltonian that does not commute with $H_f$ \cite{fuchs2022constraint}. The parameter $\beta$ is the the evolution parameter of $H_m$.
The mixer Hamiltonian is used to ensure a non uniform evolution of the initial state over which $U_f(\gamma)$ and $U_m(\beta)$ is applied repeatedly.
\begin{figure}
    \centering
    \includegraphics[width=\textwidth]{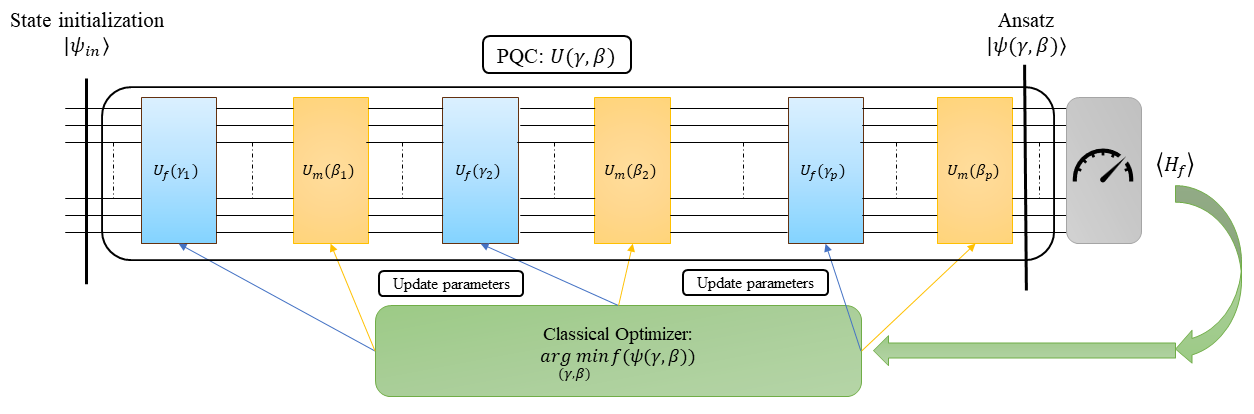}
    \caption{This is a schematic representation of QAOA. $U(\gamma, \beta)$ makes the parameterized quantum circuit (PQC), such that $U(\gamma, \beta) = U_m(\beta_p)U_f(\gamma_p)....U_m(\beta_2)U_f(\gamma_2)U_m(\beta_1)U_f(\gamma_1)$.
    The PQC thus, consists of $p$ layers of the parameterized unitary gates, and $\gamma, \beta$ are two sets of parameters; i.e. $\gamma = (\gamma_1,\gamma_2,.....\gamma_p)$ and $\beta = \beta_1,\beta_2,......,\beta_p$. 
    $\ket{\psi(\gamma, \beta)}$ is the ansatz prepared by the application of $U(\gamma, \beta)$ on the initialized state $\ket{\psi_{\mathrm{in}}}$.
    The expectation value of the cost Hamiltonian $H_f$ is minimized by a classical optimizer and gives a feedback for the updated parameters to the PQC.}
    \label{fig:qaoa}
\end{figure}
% Only the cost Hamiltonian can not change the probability distribution of the initial state prepared. 
% The mixer unitary corresponding to the mixer Hamiltonian along with the cost unitary ensures a non-uniform distribution of probabilities, as $H_f$ and $H_m$ does not commute.
The initial state is the highest energy state of the mixer Hamiltonian $H_m$ \cite{blekos2023review}.
% an educated guess compatible with the eigenspace of $H_f$. 
% The state initialization = we generally use in QAOA is any valid eigenstate of $H_f$ or an equal superposition of all of them.
For example, if all the computational basis states of a $n$ qubit system are the eigenstates of the $H_f$, then the most useful initial state will be the equal superposition of all the basis states. 
This can be achieved with unit depth operation of $\mathrm{Hadamard}$ gates on $n$ qubits, giving an initial state:
\begin{align}
    \ket{\psi_{\mathrm{in}}} = \frac{1}{\sqrt{2^n}}\sum_{i=1}^{2^{n}} \ket{i}.
\end{align}
On this initialized state, we operate the parameterized unitaries $U_f$ and $U_m$ repeatedly to amplify the probability of the solution state.
The solution state is the state corresponding to the decoded codeword, which gives the minimum value for the cost function $f(c)$.
If the unitaries are being applied for $p$ times, the output state of the PQC is defined as the ansatz $\ket{\psi(\beta, \gamma)}$ and it is as follows,
\begin{align}
    \label{eq:param_unitary}
    \ket{\psi(\gamma, \beta)} = e^{-i\beta_p H_m}e^{-i\gamma_p H_f}& e^{-i\beta_{p-1} H_m}e^{-i\gamma_{p-1} H_f}\cdots \nonumber\\ \cdots 
    & \, e^{-i\beta_2 H_m}e^{-i\gamma_2 H_f}e^{-i\beta_1 H_m}e^{-i\gamma_1 H_f}\ket{\psi_{in}}.
\end{align} 
The parameters $\gamma = (\gamma_1,\gamma_2,.....\gamma_p)$ and $\beta=(\beta_1,\beta_2,......,\beta_p)$ requires optimization classically.
A properly optimized PQC outputs the decoded solution state with higher probability.\\
Eq. \ref{eq:cost_hamiltonian_map} implies $\braket{f(c)} = \braket{H_f}$, where $\braket{f}$ is the average of the classical cost function and $\braket{H_f}$ is the expectation value of the observable $H_f$ evaluated over the ansatz quantum state $\ket{\psi(\gamma, \beta)}$.
Preparing the ansatz state of Eq. \ref{eq:param_unitary} in a quantum computer multiple times and measuring the output gives an estimate of the expectation value 
\begin{align}
    \label{eq:expn}
    \braket{\psi(\gamma, \beta)|H_f|\psi(\gamma, \beta)} = \sum_{i=1}^{2^n}p_if(i),
\end{align}
where $p_i$ denotes the probability of the occurrence of $i^\mathrm{th}$ basis state.
Minimizing this expectation value using a classical optimizer, gives us the optimized parameters $\gamma^*, \beta^*$ for the PQC.
The optimized parameters when used again according to Eq. \ref{eq:param_unitary} on a quantum computer, leads to an increment in the probability of the solution state compared to the non-solution states.
The success of the above optimization not only depends on finding the good set of parameters namely $\gamma^*, \beta^*$ but also on the choice of the cost function as well, which we discuss next.
\subsection{Works related to the hurdles of QAOA}
\label{subsec:Works related to the hurdles of QAOA}
Often QAOA fails to produce optimized solution to a problem \cite{trapps}.
Problem-efficient ansatzes are used in PQC due to their simplicity and resource efficient implementation on hardware compared to the circuits compiled from pure quantum algorithms.
But random initialization of parameters in the PQC is a major factor behind the emergence of barren plateaus \cite{mcclean}. 
A barren plateau refers to a flat optimization landscape, which is quantified by exponentially vanishing variance of the gradients of the cost function. 
In their work, McClean \emph{et al}. showed that these barren plateaus occur for deep versions of the randomly initialized PQC, essentially the depth $D$ is at least $D = \mathcal{O}(\mathrm{poly}(n))$ for the occurrence of these plateaus \cite{mcclean}.
Deep versions of PQC refers to circuits having higher $p$.
A flatter optimization landscape for deep circuits suggests that the success in finding a good solution strictly depends on the initial guess of the parameters.
As far as we know, there are no optimal techniques for determining good parameters beforehand.
In their work, Cerezo \emph{et al.} showed that the occurrence of barren plateaus is also cost function dependent \cite{cerezo}. 
The authors showed that if the cost function Hamiltonian is global, then barren plateaus will occur even for shallow-depth circuits and there is no escape from it.
However, the gradients vanish at worst polynomially for lower depths up to a depth of $D = \mathcal{O}(\log(n))$, if the defined cost function Hamiltonian is local. 
They also proved that the gradients vanish exponentially if the depth is $D = \mathcal{O}(\mathrm{poly}(n))$ and beyond. 
This renders the PQC trainable for shallow circuits. 
We use this fact to construct a trainable parameterized quantum circuit for our hybrid Viterbi decoder. 
\\\\
The inevitability of barren plateaus in higher-depth regions is worrying because this is the region where we get the solutions of the algorithm with high probability.
We formulate a uniform parameter optimization technique, which gives us efficient results for finding good solutions even for lower-depth PQC.
This is an approach to avoid barren plateaus and also makes the PQC of the hybrid decoder efficient in the context of NISQ devices.
Also, we formulate the cost function such that its Hamiltonian acts locally. 
This makes our PQC more trainable and gives us an efficient quantum algorithm for Viterbi decoding. 
We now formally develop the hybrid Viterbi decoder.
% \\One of the striking properties of QAOA is as $p$ increases (i.e. $p\to \infty$) the probability of the exact solution becomes higher. Solving complex problems may require the depth to be higher but obtaining feasible solution at large circuit depths is trapped with many hurdles. McClean et.al. showed in their work that hardware efficient ansatz   \\
% But low depth circuits (i.e. low $p$) also gives efficient approximations and that is the reason these variational approaches got more attentions in NISQ era. Although it was previously believed that low depth circuits which are not tangled by the barren plateaus may be trainable better than the classically. It is shown by Eric R. Anschuetz and Bobak T. Kiani that there exist superpolinomially small fraction of local minima within any constant energy from the global minimum even for the low depth circuits with no barren plateaus \cite{trapps}. Thus if initial parameters are chosen bad, the optimizers can not optimize good parameters.
\section{Quantum Classical Hybrid Viterbi Decoder}
\label{sec:Quantum Classical Hybrid Viterbi Decoder}
We map the Viterbi decoding problem into its variational quantum counterpart. 
To design the quantum circuit which  essentially evaluates the cost function, we need to construct the cost and the mixer Hamiltonian such that they do not commute. 
% These two must be non-commuting with each other.
% If $H_f$ and $H_m$ do not commute, they does not posses simultaneous eigenstates. 
In Eq \ref{eq:param_unitary}, their non-commuting nature ensures a change in amplitudes of the initial state on time evolution.
\subsection{Construction of the Cost Hamiltonian}
\label{subsec:Construction of the Cost Hamiltonian}
Consider the codeword $c=(c_1,c_2,\cdots,c_n)$ and received vector $y=(y_1,y_2,\cdots, y_n)$ such that $c,y \in \{0,1\}^n$ as considered in Section \ref{sec:Viterbi decoding of classical codes}.  Based on the discussion in Section \ref{sec:Quantum Approximation Optimization Algorithm}, we propose the cost function for the Viterbi decoder as \begin{align}
    f(c) = \sum_{i=1}^{n}c_i \oplus y_i,
    \label{eq:cost_func}
\end{align}
where $\oplus$ is bit-wise the modulo $2$ addition.
% The function in Eq. \ref{eq:cost_func} is a pseudo-Boolean function. 
We note that $f$ is a real-valued function, written as a weighted sum of Boolean functions of the form given in Eq. \ref{eq:cost_fn}. 
We also note that the cost function consists of finite number of local operations.
The cost function defined in Eq. \ref{eq:cost_func} takes the $n$ bit codeword as input and returns the Hamming weight between the codeword and the received vector, as the output.
The unique diagonal Hamiltonian representation of the pseudo-Boolean function is represented by a weighted sum of Pauli-$\mathrm{Z}$ operators.
The terms in the summation correspond to the Fourier expansion of the function \cite{boolean_func}.
Each bit of the codeword $c_i$ is mapped into the corresponding qubit denoted as $x_i, i \in \{1,2,.... n\}$ in the quantum register $x$.
Each bit of the received vector $y_i$ is mapped into the corresponding qubit denoted by $r_i, i \in \{1,2,.... n\}$ in the quantum register $r$.
The Hamiltonian corresponding to Eq. \ref{eq:cost_func} is \begin{align}
    H_f = \sum_{i=1}^{n} \left(\frac{1}{2}\mathrm{I} - \frac{1}{2}{\mathrm{Z}_{x_i}}{\mathrm{Z}_{r_i}}\right),
    \label{eq:cost_hamiltonian}
\end{align}
where
$\mathrm{Z}_{x_i}$ is the Pauli-$\mathrm{Z}$ operator acting on the $i^\mathrm{th}$ qubit of the register $x$. 
Similarly, $\mathrm{Z}_{r_i}$ is the Pauli- $\mathrm{Z}$ operator acting on the $i^\mathrm{th}$ qubit of the register $r$.
Also, we would like to point out that the register $r$ is initialized according to the received erroneous vector.
This refers to $r_i$ being fixed spins and the spin is in accordance with the value of $y_i$.
In Appendix \ref{ap:appendixproof} we showed how we came up with Eq. \ref{eq:cost_hamiltonian}.
The goal is to find the ground state of the above cost Hamiltonian, while restricting ourselves only to the codespace $\mathcal{C}$.
For the same, QAOA requires a mixer Hamiltonian. 
We propose a correlation between the minimum Hamming distance of the code and the mixer Hamiltonian next, that will ensure the preservation of the above mentioned constraint.
\subsection{Construction of the Mixer Hamiltonian}
\label{sec:viterbi_mix}
Along with the cost unitary, a mixer unitary is also applied on the initial state. 
The states in the initial uniform superposition are the eigenstates of the cost Hamiltonian. 
So, only the cost unitary cannot disturb the amplitude profile of the initial state. 
It only adds a phase to each of the states in the superposition without altering the probability profile of the individual states.
The mixer Hamiltonian in that sense mixes the amplitudes.
The mixer Hamiltonian corresponding to the mixer unitary must conserve the search space, which is essential in the proposed Viterbi decoding algorithm. 
In many applications of QAOA including the Viterbi decoding, the search space does not contain all the basis members of a $n$ qubit system. 
The restricted search space in such situations requires choosing the mixer Hamiltonian that does not map the states outside of the search space $\mathcal{C}$. 
Therefore, the mixer unitary must conserve the same search space.
The constraint on the restricted search space determines the state initialization, as discussed in Section \ref{sec:Quantum Approximation Optimization Algorithm}.
Therefore, the initial state is
\begin{align}
    \ket{\psi_{\mathrm{in}}} = \frac{1}{\sqrt{2^k}}\sum_{i \in \mathcal{C}} \ket{i}.
    \label{eq:cons_ini}
\end{align}
Further, the mixer Hamiltonian must not have any eigenstates in the codespace either.
This ensures the non-commuting nature of the mixer Hamiltonian and the cost Hamiltonian.
Now we focus on constructing the mixer Hamiltonian, compatible with the properties it must satisfy, as discussed. 
% The minimum Hamming distance of the codespace is defined as the minimum value of the Hamming distance between all pairs of the elements in that search space. 
We assume there must be a null element in the search space, which is the all zero codeword.
Further we consider there are $v$ elements in the search space (essentially the codespace) having Hamming distance $d$ with respective to the null element, where $d$ is the minimum Hamming distance of the codespace.
This number $v$ is different for different codes.
We reiterate that finding the codewords with Hamming weight $d$ is a NP-complete problem as shown by Berlekamp \emph{et al.} \cite{berlekamp}.
A Grover inspired algorithm can be used as a subroutine to introduce a quadratic improvement in finding those codewords \cite{syndrome_us}.
Considering we have the $d$ distance codewords at ease, we state a theorem to generate the valid mixer for any restricted search space.
\begin{theorem}
\label{theorem_mixer}
% \begin{lemma}
If $\mathcal{C}$ is a restricted search space of a linear block code in a $2^n$ dimensional system and $d$ is the minimum Hamming distance of $\mathcal{C}$, then the mixer Hamiltonian $H_m$ is the sum of d-local Pauli-$\mathrm{X}$ operators, which maps the null element to any of the $v$ elements in $\mathcal{C}$, with Hamming weight $d$: 
\begin{align}
    \label{eq:mixer_hamiltonian}
    H_m = \sum_{c: w_H(c) = d}\,\,\prod_{j: c_j = 1}\mathrm{X_j}.
\end{align}
%\end{lemma}
\end{theorem}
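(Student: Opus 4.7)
The plan is to verify that the operator $H_m$ in Eq. \ref{eq:mixer_hamiltonian} satisfies the two defining properties of a mixer stated just before the theorem: it preserves the codespace $\mathcal{C}$, and it has no eigenstates lying in $\mathcal{C}$, the latter being what guarantees non-commutativity with $H_f$.

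For the preservation part, I would observe that each summand $T_c := \prod_{j: c_j = 1} \mathrm{X}_j$ acts on a computational basis state as $T_c \ket{c'} = \ket{c' \oplus c}$, where $\oplus$ denotes componentwise addition modulo $2$. Because the index $c$ in Eq. \ref{eq:mixer_hamiltonian} ranges over codewords of minimum weight $d$, each such $c$ already lies in $\mathcal{C}$; linearity of the code then forces $c' \oplus c \in \mathcal{C}$ whenever $c' \in \mathcal{C}$. Hence every $T_c$ maps $\mathrm{span}\{\ket{c'} : c' \in \mathcal{C}\}$ into itself, and so does the sum $H_m$. In particular, on the null element the action reproduces the statement of the theorem, $H_m \ket{0_{1\times n}} = \sum_{c : w_H(c) = d} \ket{c}$, a uniform superposition over the $v$ minimum-weight codewords.

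For the non-triviality condition I would first note that no computational basis codeword is fixed by any $T_c$, because $w_H(c) = d \ge 1$ forces $c \ne 0$ and hence $T_c\ket{c'} \ne \ket{c'}$. I would then upgrade this to the required non-commutativity $[H_m, H_f] \ne 0$ via a direct computation: since $H_f \ket{c'} = f(c') \ket{c'}$ is diagonal in the computational basis,
\begin{equation*}
[H_m, H_f]\, \ket{c'} \;=\; \sum_{c: w_H(c) = d} \bigl(f(c') - f(c' \oplus c)\bigr)\, \ket{c' \oplus c},
\end{equation*}
which is nonzero whenever the received vector $r$ is such that some minimum-weight translation changes the Hamming distance from $c'$ to $r$; this is the generic case for any meaningful decoding instance. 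Consequently $H_m$ and $H_f$ cannot share a common eigenbasis within $\mathcal{C}$, so $H_m$ genuinely mixes the amplitudes that the cost unitary only phase-modulates.

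The step I expect to be the main obstacle is the passage from \emph{no fixed computational-basis codeword} to \emph{no eigenstate in $\mathcal{C}$} for arbitrary superpositions, since a symmetric combination of permutation-like operators could in principle fix a superposition even when no individual summand has a fixed point. I would sidestep this by working directly with the commutator above: exhibiting even a single codeword on which $[H_m, H_f]$ acts non-trivially is enough to preclude a shared eigenbasis, which is the precise property QAOA requires from the mixer. Finally, I would close by remarking that the choice of \emph{minimum}-weight codewords (rather than heavier ones) is what keeps every Pauli string in $H_m$ of weight exactly $d$, producing the local mixer whose trainability was justified in Section \ref{subsec:Works related to the hurdles of QAOA}.
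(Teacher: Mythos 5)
Your proof is correct, but it runs in the opposite direction from the paper's. The paper starts from the constraint side: it defines the mixer abstractly as $H_m=\sum_{x_j,x_k\in\mathcal{C}}g_{j,k}\ket{x_k}\bra{x_j}$ with $g_{j,k}=1$ exactly when $d(x_j,x_k)=d$, invokes the equivalence of the mappings of $H_m$ and $e^{-i\beta H_m}$ (proved in its Appendix via the spectral expansion of the time-evolution operator), asserts that expanding this $g$-matrix form ``it is verified'' to coincide with $\sum_{c:w_H(c)=d}\prod_{j:c_j=1}\mathrm{X}_j$, and then checks the action on $\ket{0}^{\otimes n}$. You instead take the Pauli-$\mathrm{X}$ sum as given, identify each summand as the translation $T_c\ket{c'}=\ket{c'\oplus c}$ by a minimum-weight codeword, and use linearity of the code to get codespace preservation together with the action on the null element; this actually supplies the computation the paper leaves implicit, and it quietly avoids a point the paper glosses over, namely that the $g$-matrix operator and the Pauli-$\mathrm{X}$ operator agree only on the span of $\mathcal{C}$ (they differ on the orthogonal complement), so a direct verification of the Pauli form is the cleaner statement. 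Your replacement of the paper's informal requirement ``$H_m$ has no eigenstates in the codespace'' by an explicit commutator computation is also the sounder move, since that informal requirement is in fact violated here (the uniform superposition over $\mathcal{C}$ is an $H_m$-eigenstate with eigenvalue $v$), whereas non-commutativity of $H_m$ with $H_f$ is what QAOA actually needs and your argument delivers it, with the honest caveat that it presumes some minimum-weight translation changes the cost, which holds in any nondegenerate decoding instance. In short: same conclusion, but your key lemma is the translation action plus code linearity, while the paper's is the $g$-matrix construction plus the Hamiltonian-versus-unitary mapping equivalence of its Appendix.
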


\begin{proof}
To prove Theorem \ref{theorem_mixer}, we first define the framework of the mixer Hamiltonian. 
Let $\ket{\psi}$ be the initial state of the variational algorithm. 
We choose it to be the equal superposition of all the states from the codespace. 
This makes the initial guess feasible. The initial guess is 
\begin{align}
    \ket{\psi_{\mathrm{in}}} & = U_g \ket{0}^{\otimes n}, \nonumber\\
    & = \frac{1}{\sqrt{2^k}} \sum_{i \in \mathcal{C}} \ket{i},
\end{align}
where, $U_g$ is an encoding operator ensuring Eq. \ref{eq:cons_ini} \cite{syndrome_us}. 
The mixer is defined as $e^{-i\beta H_m}$ in Eq. \ref{eq:mixer_qaoa}. 
The mixer must satisfy the constraint of mapping the states into the codespace only and as $H_m$ is Hermitian, the mapping of both the mixer and the mixer Hamiltonian is same.
\begin{align}
    \label{eq:span of mixer}
    & U_m\ket{\psi_{\mathrm{in}}}  \in \mathcal{C}, \nonumber\\
    \implies & e^{-i\beta H_m}\ket{\psi_{in}}  \in \mathcal{C}, \nonumber\\
      \implies & H_m\ket{\psi_{in}}  \in \mathcal{C}.
\end{align}
A detailed proof of this equivalence is given in Appendix \ref{ap:appendix1}.
% The above Eq. \ref{eq:span of mixer} implies the feasible mapping of the states within the codespace by the mixer Hamiltonian. 
To follow this constraint we propose the Hamiltonian be
\begin{align}
    \label{eq:defined mixer}
    H_m = \sum_{c,c^{\prime} \in \mathcal{C}}g_{c,c^{\prime}}\ket{c}\bra{c^{\prime}}, 
\end{align}
\begin{align}
\label{eq:g}
\text{where,} \hspace{0.4cm} g_{c,c^{\prime}}  =
\begin{cases}
     1 &\forall \, (c,c^{\prime}): d(c,c^{\prime}) = d. \\
     0 &\text{otherwise}.
\end{cases}
\end{align}
% We validate the feasible map followed from the Eq. \ref{eq:span of mixer} is imposed using Eq. \ref{eq:defined mixer} and Eq. \ref{eq:g}. 
Eq. \ref{eq:defined mixer} and Eq. \ref{eq:g} suggests that $H_m$ must contain the tensor product of $d$-local Pauli operators, such that individually they map any state to a state with Hamming distance $d$. $g$ is the matrix which determines this valid transitions through Eq. \ref{eq:g}, and has a dimension $\mathrm{dim}(\mathcal{C})\times\mathrm{dim}(\mathcal{C})$. 
Thus for any given $(n,k,d)$ code $g$ can be constructed. 
Using the matrix $g$ and expanding the matrices for each term of Eq. \ref{eq:defined mixer}, it is verified that 
\begin{align}
    H_m = \sum_{c,c^{\prime} \in \mathcal{C}}g_{c,c^{\prime}}\ket{c}\bra{c^{\prime}} = \sum_{c: w_H(c) = d} \, \, \prod_{j: c_j = 1}\mathrm{X_j} .
\end{align}
% It thus naturally follows from Eq.\ref{eq:defined mixer},
Now for any linear block code using $\ket{c} = \ket{0}^{\otimes n}$, Eq. \ref{eq:defined mixer} implies:
\begin{align}
    H_m\ket{0}^{\otimes n} & = \sum_{c^{\prime} \in \mathcal{C}}g_{0,c^{\prime}}\ket{c^{\prime}}\bra{0}^{\otimes n}\ket{0}^{\otimes n}, \nonumber\\
    & = \sum_{c^{\prime}: d(c,c^{\prime}) = d} \, \ket{c^{\prime}}.
\end{align}
\end{proof}
Thus, the expression of $H_m$ in Eq. \ref{eq:mixer_hamiltonian} is verified. 
Each term of Eq. \ref{eq:mixer_hamiltonian} or Eq. \ref{eq:defined mixer} is mapping the elements in $\mathcal{C}$ into another element in the same $\mathcal{C}$, with Hamming distance $d$.
From the Eq. \ref{eq:defined mixer} it is fairly clear that the depth of the PQC increases according to the minimum Hamming weight and structure of the code. 
If in the codespace there are more elements with minimum Hamming weight respective to the null element, the circuit depth increases. 
Thus, PQC for different linear block codes may have higher depths. 
We now focus on incorporating these two Hamiltonians into their corresponding unitary gates and construct the parameterized quantum circuits for decoding linear block codes using the Viterbi decoding method. 

\subsection{Parameterized Quantum Circuit design for [6,3,3] Linear Block Code}
\label{subsec:Parameterized Quantum Circuit}
\begin{figure}[t!]
    \centering
\begin{quantikz}[font = \Huge]
\lstick{$\ket{x_1}$} & \ctrl{1} & \qw & \ctrl{1} & \qw\\
\lstick{$\ket{r_1}$} & \targ{} & \gate{R_\mathrm{z}(-\gamma)} & \targ{} & \qw
\end{quantikz}
    \caption{Circuit for $e^{i\frac{\gamma}{2}{\mathrm{Z}_{x_1}}{\mathrm{Z}_{r_1}}}$. 
    This is one of the terms from Eq. \ref{eq:cost_unitary}. 
    $\ket{x}$ refers to the codeword register and $\ket{r}$ refers to the ancilla register. $R_z(-\gamma)$ is the single qubit rotation gate about the $z$ axis, $R_z(-\gamma) = e^{i \frac{\gamma}{2} \mathrm{Z}}$.}
    \label{fig:exp_z1z2}
% \end{figure}
% \begin{figure}[h!]
    \centering

\begin{quantikz}[font = \Huge]
    \lstick{$\ket{x_1}$} & \gate{\mathrm{H}} & \ctrl{1} & \qw & \qw & \qw & \ctrl{1} & \gate{\mathrm{H}} & \qw\\
    \lstick{$\ket{x_2}$} & \gate{\mathrm{H}} & \targ{} & \ctrl{1} & \qw & \ctrl{1} & \targ{} & \gate{\mathrm{H}} & \qw\\
    \lstick{$\ket{x_3}$} & \gate{\mathrm{H}} & \qw & \targ{} & \gate{R_\mathrm{z}(2\beta)} & \targ{} & \qw & \gate{\mathrm{\mathrm{H}}} & \qw
\end{quantikz}
    \caption{This is the circuit for $e^{-i \beta \mathrm{X}_1\mathrm{X}_2\mathrm{X}_3}$. We exploit the fact that $H \otimes H \otimes H e^{i \beta \mathrm{Z}_1 \mathrm{Z}_2 \mathrm{Z}_3} H \otimes H \otimes H = e^{i \beta \mathrm{X}_1 \mathrm{X}_2 \mathrm{X}_3}$.}
    \label{fig:expx0x1x2}
\end{figure}
We now design the PQC for the $[6,3,3]$ code. 
The initial state is an equal superposition of all the codewords for the $[6,3,3]$ code. 
Thus, the restricted search space is of dimension $2^3$. 
We can generate this superposition using a unitary operator constructed from the generator matrix of the $[6,3,3]$ code \cite{qml}. The generator matrix of this $[6,3,3]$ code is
\begin{align}
    \label{eq:generator633}
    \mathbf{G} = \begin{bmatrix}
1&0&0&0&1&1\\
0&1&0&1&0&1\\
0&0&1&1&1&0\\
\end{bmatrix}.
\end{align}
The unitary  operator $U_g$ \cite{qml}, which generates an equal superposition of all codewords of the $[6,3,3]$ code belonging to codespace $\mathcal{C}$ is
\begin{align}
    U_g = \prod_{{4}\leqslant i\leqslant {6}}\Big(\prod_{1\leqslant j\leqslant {3},G_{ji}=1} \mathrm{CX}_{j,i}\Big) \prod_{1\leqslant j\leqslant {3}} \mathrm{H}_j,
\end{align}
where $\mathrm{CX_{j,i}}$ is the controlled Pauli-$\mathrm{X}$ gate.
Using this $U_g$ on a set of all qubits initialized to $\ket{0}$ generates the desired uniform superposition of the elements in codespace $\mathcal{C}$. The initial state can be written as
\begin{align}
    \label{eq:initial_state}
    \ket{\psi_{\mathrm{in}}} & = U_g \ket{0}^{\otimes 6},\nonumber \\
               & = \frac{1}{\sqrt{2^3}} \sum_{i \in \mathcal{C}} \ket{i}.
\end{align} 
We require additional six ancilla qubits which store the information of the received erroneous vector. 
All the ancilla qubits are initialized to the $\ket{0}$ state. 
Whenever a bit of the received vector is $1$, we apply a Pauli-$\mathrm{X}$ gate on the corresponding qubit.
The reason we incorporate these ancillas lies in the structure of the cost function Hamiltonian $H_f$ in Eq. \ref{eq:cost_hamiltonian}.
The unitary corresponding to the $H_f$ is
\begin{align}
    U_f(\gamma) & = e^{-i\gamma H_f}, \nonumber \\
        & = e^{-i\gamma \sum_{i=1}^{n} \left(\frac{1}{2}\mathrm{I} - \frac{1}{2}{\mathrm{Z}_{x_i}}{\mathrm{Z}_{r_i}}\right)}, \nonumber \\
        & = \prod_{i = 1}^{n}e^{i\frac{\gamma}{2}{\mathrm{Z}_{x_i}}{\mathrm{Z}_{r_i}}},
        \label{eq:cost_unitary}
\end{align}
where $n=6$ is the number of bits in the codeword and $\gamma$ is a scalar evolution parameter of $H_f$.
The identity operator contributes a global phase to the state. 
As the global phase has no physical effect on the quantum state, the effect of the identity operator is redundant in the PQC.
Eq. \ref{eq:cost_unitary} essentially represents a connection 
%between each qubits from the codeword register $x$ to it's corresponding qubit in the ancilla register 
between the respective qubits of register $x$ and
$r$. 
For example, we demonstrate the quantum circuit of one such term $e^{i\frac{\gamma}{2}{\mathrm{Z}_{x_1}}{\mathrm{Z}_{r_1}}}$ from Eq. \ref{eq:cost_unitary} in Fig. \ref{fig:exp_z1z2}. 
This connects $x_1$ to $r_1$.\\
% % \Qcircuit @C=4cm @R=2.5cm {& \ctrl{1} & \qw\\
% & \targ & \qw}
% \begin{figure}[t!]
%     \centering
%     \includegraphics[width=\linewidth]{exp_z1z2.png}
%     \caption{Circuit for $e^{i\frac{\gamma}{2}{\mathrm{Z}^x_1}{\mathrm{Z}^r_1}}$. Here we showed one of the terms from Eq.\ref{eq:cost_unitary}.}
%     \label{fig:exp_z1z2}
% \end{figure}
The other unitary we need to apply with the cost unitary is the mixer $H_m$. 
For which we first describe the mixer Hamiltonian for the $[6,3,3]$ linear block code.
The $[6,3,3]$ code generated by the generator matrix of Eq. \ref{eq:generator633} is\begin{align*}
\mathcal{C} = \{000000, 001110, 010101, 100011, 011011, 110110, 101101, 111000\}.
\end{align*}
Using Eq. \ref{eq:mixer_hamiltonian} from Theorem \ref{theorem_mixer} we find that the mixer Hamiltonian for the $[6,3,3]$ code is 
\begin{align}
    H_m = \mathrm{X}_1\mathrm{X}_2\mathrm{X}_3+\mathrm{X}_1\mathrm{X}_5\mathrm{X}_6+\mathrm{X}_3\mathrm{X}_4\mathrm{X}_5+\mathrm{X}_2\mathrm{X}_4\mathrm{X}_6.
\end{align}
Thus, the mixer unitary (mixer) defined as $U_m(\beta) = e^{-i\beta H_m}$ is of the form
\begin{align}
    U_m(\beta) & = e^{-i\beta (\mathrm{X}_1\mathrm{X}_2\mathrm{X}_3 + \mathrm{X}_1\mathrm{X}_5\mathrm{X}_6 + \mathrm{X}_3\mathrm{X}_4\mathrm{X}_5 + \mathrm{X}_2\mathrm{X}_4\mathrm{X}_6)}, \nonumber \\
        & = e^{-i \beta \mathrm{X}_1\mathrm{X}_2\mathrm{X}_3}  e^{-i \beta \mathrm{X}_1\mathrm{X}_5\mathrm{X}_6}  e^{- i \beta \mathrm{X}_3\mathrm{X}_4\mathrm{X}_5}  e^{-i \beta \mathrm{X}_2\mathrm{X}_4\mathrm{X}_6},
        \label{eq:mixmix}
\end{align}
where $\beta$ is a scalar evolution parameter of $H_m$.
\begin{figure}[t!]
    \centering
    \includegraphics[height=11cm,width=0.75\textwidth]{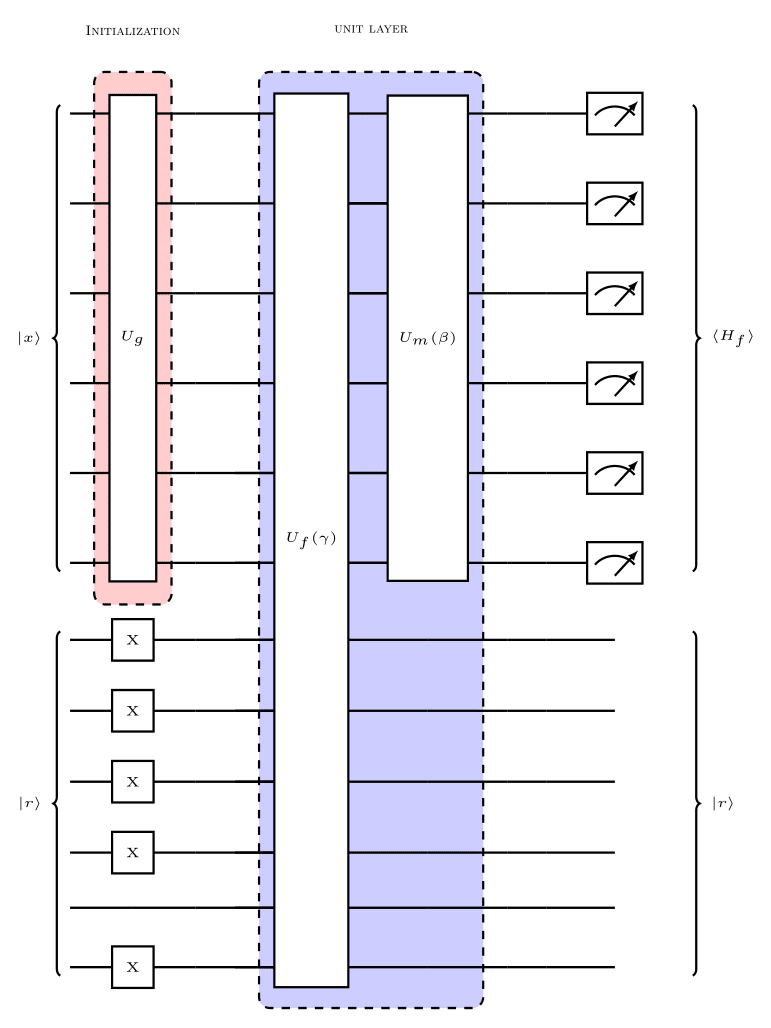}
    \caption{This is the circuit for an unit layer ($p = 1$) of QAOA applied on the initial state of the PQC. 
    $\ket{x}$ is the register representing $c$. 
    $\ket{r}$ is the register representing $y$. 
    We show here the circuit with $\ket{r} = \ket{111101}$. 
    $U_g$ creates the valid initial state $\ket{\psi_{in}}$ \cite{syndrome_us} of Eq. \ref{eq:initial_state}. An unit layer of QAOA consists of a single application of the cost unitary $U_f(\gamma)$ Eq. \ref{eq:cost_unitary} and the mixer $U_m(\beta)$ Eq. \ref{eq:mixmix}. 
    The mixer unitary is applied on the $x$ register only, following the Theorem \ref{theorem_mixer}. The detailed circuit for the mixer and cost unitary is shown in Appendix \ref{subsec:Circuit of the mixer} and Appendix \ref{subsec:Circuit of the cost unitary} respectively. Expectation value of the cost Hamiltonian is obtained by Measuring the $x$ register through repeated calling of the PQC.}
    \label{fig:unitlayer633}
\end{figure}
We demonstrate the quantum circuit for the term $e^{-i \beta \mathrm{X}_1\mathrm{X}_2\mathrm{X}_3}$ from Eq. \ref{eq:mixmix} in Fig. \ref{fig:expx0x1x2}. Following this pattern of $\mathrm{CNOT}$, $\mathrm{Hadamard}$ and $\mathrm{R_Z}$ gate we construct circuits for the other exponents of the product of Pauli-$\mathrm{X}$ gates.
% \begin{figure}[hbt!]
%     \centering
%     \includegraphics[width= \linewidth]{exp_x2x4x6.png}
%     \caption{Circuit for the $e^{-i \beta \mathrm{X}_2\mathrm{X}_4\mathrm{X}_6}$. We have exploited the fact that application of $Hadamard$ gate will change the $\mathrm{Z}$ basis to $\mathrm{X}$ basis and extended the exponent of pauli $\mathrm{Z}$ circuit.}
%     \label{fig:expx2x4x6}
% \end{figure}
Finally, $U_f$ of Eq. \ref{eq:cost_unitary} and $U_m$ of Eq. \ref{eq:mixmix} form the unit layer of the PQC of QAOA for the Viterbi decoder of the $[6,3,3]$ code. 
We are referring to a single operation of the cost unitary $U_f$, Eq. \ref{eq:cost_unitary} and a single operation of the mixer, Eq. \ref{eq:mixmix} as a single unitary layer of QAOA. 
This unit layer circuit is shown in Fig. \ref{fig:unitlayer633}. 
This one layer of unitary is repeated on the initial quantum state $\ket{\psi_{in}}$ of Eq. \ref{eq:initial_state}.
If repeated $p$ times, the final state becomes $\ket{\psi(\gamma_1,\gamma_2,....,\gamma_p, \beta_1,\beta_2,....,\beta_p,)}$, as given in Eq. \ref{eq:param_unitary}. 
The PQC requires a classical optimizer to optimize these parameters for finding the minimum of the cost function. 
But as discussed in Section \ref{subsec:Works related to the hurdles of QAOA}, finding good parameters for the PQC is difficult. 
To overcome this difficulty, we propose an uniform parameter optimization method, which has better efficiency for finding good parameters in shallow depth.
% \begin{figure}[h!]
%     \advance\leftskip-1.5cm
%     \includegraphics[height = 20 cm, width = 18cm]{unit_layer633.png}
%     \caption{This is the circuit for an unit layer of unitaries applied on the initial state of the PQC. The mixer unitary is applied on the primary message qubits only as our proposed theorem\ref{theorem_mixer} suggests.}
%     \label{fig:unitlayer633}
% \end{figure}

% \begin{quantikz}[font = \large]
% \lstick[wires = 6]{$\ket{x}$} & \gate[wires=6]{U_g}\gategroup[6,steps=1,style={dashed,rounded corners,fill=red!20, inner xsep=2pt},background,label style={label position=above,anchor=north,yshift=0.5cm}]{{\bf{\sc{Initialization}}}} & \qw & \qw & \gate[wires=6]{U_m(\beta)}\gategroup[12,steps=2,style={dashed,rounded corners,fill=blue!20, inner xsep=2pt},background,label style={label position=above,anchor=north,yshift=0.5cm}]{{\bf{\sc{unit layer}}}} & \gate[wires=12]{U_f(\gamma)} & \qw & \qw & \meter{}\rstick[wires = 6]{$\braket{H_f}$}\\
%  & & \qw & \qw & & & \qw & \qw & \meter{}\\
%  & & \qw & \qw & & & \qw & \qw & \meter{}\\
%  & & \qw & \qw & & & \qw & \qw & \meter{}\\
%  & & \qw & \qw & & & \qw & \qw & \meter{}\\
%  & & \qw & \qw & & & \qw & \qw & \meter{}\\
% \lstick[wires = 6]{$\ket{r}$} & \gate{X} & \qw & \qw & \qw & & \qw & \qw & \qw & \rstick[wires = 6]{$\ket{r}$}\\
%  & \gate{X} & \qw & \qw & \qw & & \qw & \qw & \qw &\\
%  & \gate{X} & \qw & \qw & \qw & & \qw & \qw & \qw &\\
%  & \gate{X} & \qw & \qw & \qw & & \qw & \qw & \qw &\\
%  & \qw & \qw & \qw & \qw & & \qw & \qw & \qw &\\
%  & \gate{X} & \qw & \qw & \qw & & \qw & \qw & \qw &
% \end{quantikz}
\section{Proposed Uniform Parameter Optimization (UPO) Strategy}
\label{sec:UPO}
Due to the hurdles we discussed in Section \ref{subsec:Works related to the hurdles of QAOA}, it is important to develop strategies for finding good parameters of the PQC. 
This ensures minimization of the cost function given in Eq. \ref{eq:expn}.
There are two such strategies discussed by Lee \emph{et al}. \cite{fixed_parameter}.
One of them is the random initialization method, which assumes multiple random initialization of the parameters to evaluate an average of the minimum approximation ratio, which is defined as 
\begin{align}
    \label{eq;ratio}
    \alpha = \frac{f(\psi(\gamma, \beta))}{f_{\mathrm{min}}},
\end{align}
where $\psi(\gamma, \beta)$ is the string representation of the state $\ket{\psi(\gamma, \beta)}$ prepared by the PQC.
This random initialization is repeated and the set of parameters producing the least value of average $\alpha$ is chosen as the best fit for the set of optimized parameters.
Another strategy the authors discussed is parameters fixing strategy, which we denote as FPO.
In FPO at each $p$, the parameters are optimized for $q$ random initialization. 
Parameters giving minimum $\alpha$ are chosen as the optimized parameters for that particular $p$ and then $p$ is increased.
We show that this approach of FPO offers very little improvement in finding solutions.\\
% The cost function we defined in Eq.\ref{eq:cost_func} is local. 
% So, we have advantage for shallow depth circuits. 
% Although the structure of the codes that we want to decode through Viterbi decoder may vary the depth of the circuit even for a fixed layer of uniteries $p$. 
% We thus suggest sampling of best parameters out of certain repetitions of training.
We suggest training the PQC with each parameter of different layers being equal for the mixer and the cost unitaries respectively. 
It implies 
\begin{align}
    \label{eq:upo_parameter}
    \beta_1 = \beta_2 = \cdots = \beta_p = \beta_0 \text{ and } \gamma_1 = \gamma_2 = \cdots  = \gamma_p = \gamma_0. 
\end{align}
We call this uniform parameter optimization (UPO). 
\begin{algorithm}[ht!]
\caption{Proposed Uniform Parameter Optimization for the Hybrid Viterbi Decoder}\label{alg:algorithm1}
\begin{algorithmic}[1]

\State Initialize the state is uniform superposition of all the elements of the codespace $\mathcal{C}$. 
\begin{align}
\label{eq:in_state}
    \ket{\psi_{\mathrm{in}}} & = U_g \ket{0}^{\otimes n} \nonumber \\
    & = \frac{1}{\sqrt{2^k}} \sum_{i \in \mathcal{C}} \ket{i}
\end{align}

\State Add ancilla qubits and apply Pauli- $\mathrm{X}$ gates in appropriate positions to represent the received erroneous vector $y$. \begin{align*}
    \ket{\psi_{\mathrm{in}}} & \to \ket{\psi_{\mathrm{in}}} \otimes \ket{y}\\
    \ket{\psi_{\mathrm{in}}} & \to \ket{\psi_{\mathrm{in}}}\otimes\prod_{j: wt(r_j) = 1} \mathrm{X}_{r_j} \ket{0}^{\otimes n}
\end{align*}
% \State Define $f = []$.
%\While {$j \leq p$}
\State Declare $f, t_{\beta}$, $t_{\gamma}$ as array of variables.
\While {$k \leq q$}
\State $\beta_o$ $\to$ random, $\gamma_o$ $\to$ random.
\State $\ket{\psi_{\mathrm{in}}} \otimes \ket{r} \to \ket{\psi(\gamma, \beta)} \otimes \ket{r} = \left(U^p_m(\beta_o)U^p_f(\gamma_o)\ket{\psi_{\mathrm{in}}}\right) \otimes \ket{r}$.
\State Measure the $n$ codespace qubits of $x$.
\State Run the circuit for $\mathrm{2000}$ shots.
\State Use Eq. \ref{eq:expn} and Eq. \ref{eq:cost_func} to find the expectation value $\braket{f} = \braket{H_f} = \sum_{i \in \mathcal{C}} \frac{m_i}{\mathrm{shots}}f(i)$.
\State Optimize $\braket{f}$ and store the optimized parameter in $t_{\beta}$ and $t_{\gamma}$ respectively and the optimized expectation value in $f$.
\EndWhile
\For {$k$ in $f[k]$}
\If {$f[k]: \mathrm{min}\left(\frac{f[k]}{f_{\mathrm{min}}}\right)$}
\State $p_1 = t_{\beta}[k]$ and $p_2 = t_{\gamma}[k]$.
\EndIf
\EndFor, \Return \hspace{0.25cm}{$p_1$, $p_2$}.
\end{algorithmic}
\end{algorithm}
We train the parameterized quantum circuit for a fixed $p$ with all the $p$ mixer parameters having the value $\beta_o$ and all the $p$ cost unitary parameters having the value $\gamma_o$. 
There are $q$ random initialization and each $q$ set of $\{\beta_o,\gamma_o\}$ is applied to evaluate the cost $\braket{H_f}$ over the $\ket{\psi(\gamma, \beta)}$, where the set of parameters $\beta$ and $\gamma$ satisfies Eq. \ref{eq:upo_parameter}.
Then we choose the parameters, which produces the least of the minimum approximation ratio $\alpha$ as the set of optimized parameters.
We observe that the above method, namely UPO, can potentially obtain better parameters for the PQC than the parameter fixing strategy or FPO. 
We present the detailed uniform parameter optimization method for a generalized $[n,k,d]$ linear block code in Algorithm \ref{alg:algorithm1}.  
\subsection{Performance Analysis of the UPO}
\label{sec:Performance Analysis of the UPO}
\begin{figure}[ht!]
     \centering
     \begin{subfigure}[]{0.48\textwidth}
         \centering
         \includegraphics[width = \textwidth]{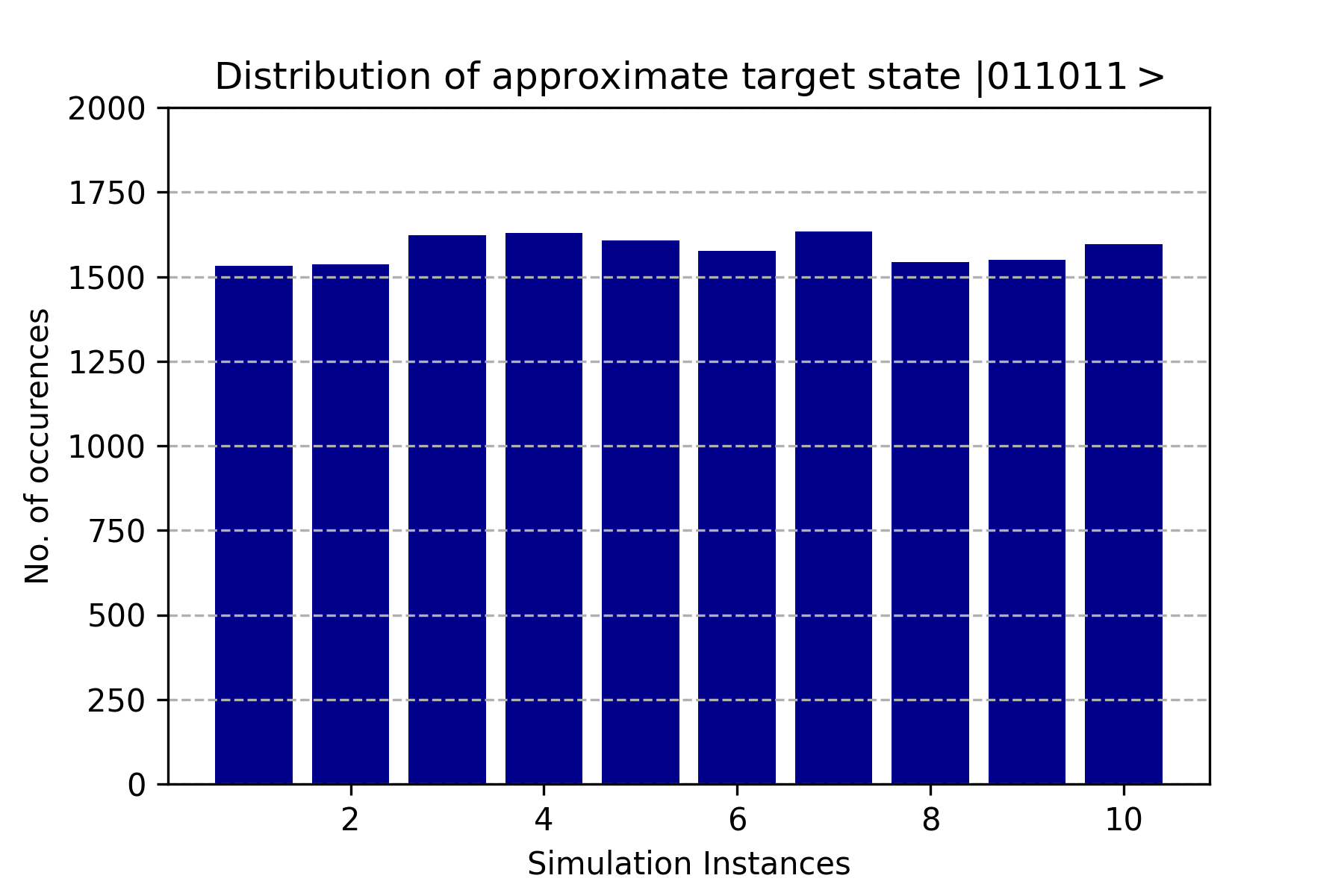}
         \caption{UPO}
         \label{fig: upo_633_10}
     \end{subfigure}
      \hfill
     \begin{subfigure}[]{0.48\textwidth}
         \centering
         \includegraphics[width = \textwidth]{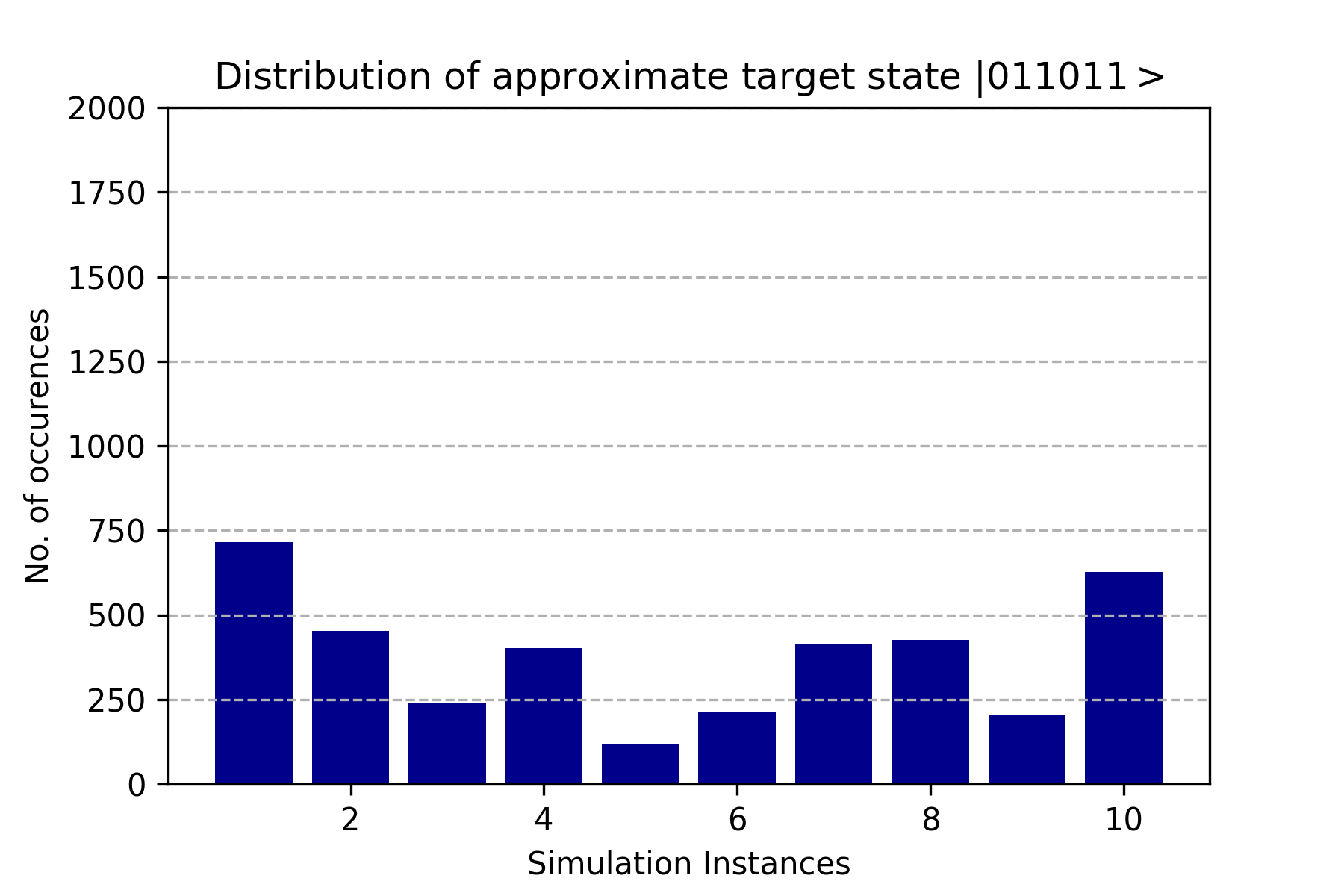}
         \caption{FPO}
         \label{fig:fpo_633_10}
     \end{subfigure}
     
        \caption{Here, in each case, the number of unitary layers is fixed to $p = 3$ for the decoder of a $[6,3,3]$ code. We choose to select the parameters with minimum approximation ratio as the good parameters out of $q = 5$ randomly initialized optimizations. We repeat each training method $10$ times and run the PQC corresponding to each set of good parameters. We observe that upon tracing the nature of the probability profile of the actual solution state, the Uniform Parameter Optimization has better potential towards obtaining good parameters, as it amplifies the actual solution state and the pattern is outstandingly regular.}
        \label{fig:upo vs fpo 1}
\end{figure}
\begin{figure}[h!]
    \centering
     \includegraphics[width = 0.48\textwidth]{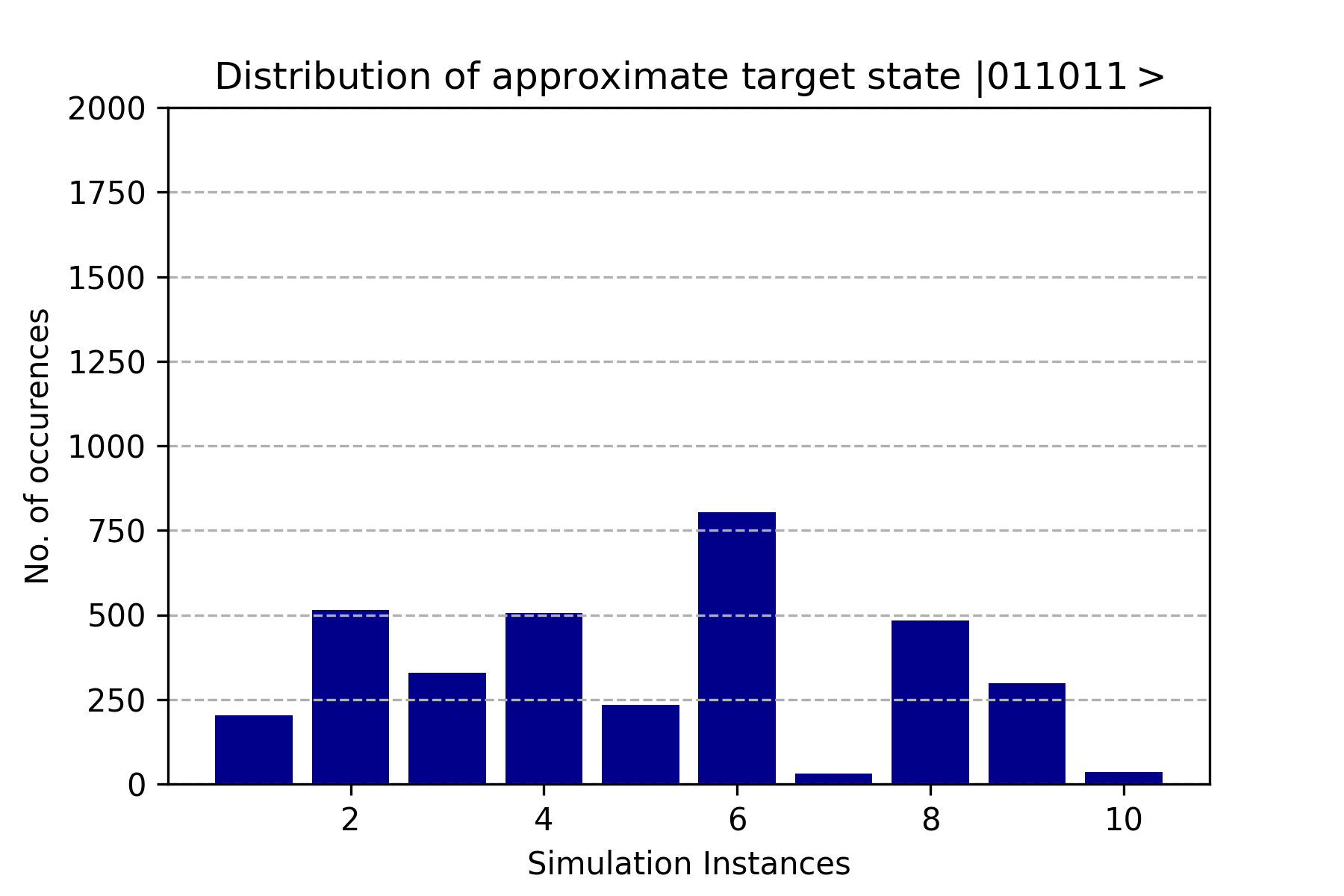}
     \caption{FPO for the $[6,3,3]$ code with $p = 3$. We chose to select the parameters which yield minimum approximation ratio as the good parameters out of $q = 20$ optimizations for each $p$. The occurrence of the state is quite random and thus is inconclusive.}
     \label{fig:fpo_633_10_20}
\end{figure}
We compare our uniform parameter optimization (UPO) with the fixed parameter optimization (FPO) technique suggested by Lee \emph{et al}. \cite{fixed_parameter} for obtaining good parameters. 
We apply it to find the least path from the trellis of a $[6,3,3]$ code upon receiving the erroneous vector $y = 111011$. 
Further, we analyze the performance of the two strategies by tracing out the occurrence of the actual solution state which is $\ket{011011}$, out of $2000$ simulation shots for $10$ training simulations. 
We observe that in Fig. \ref{fig:upo vs fpo 1} at each simulation instance after training the PQC with the UPO method, the actual solution has strikingly high occurrences. 
This shows the effectiveness of the proposed strategy towards obtaining good parameters. 
The fixed parameter optimization is quite random in obtaining a good set of optimized parameters.
We observe that there are instances where the fixed parameter optimization fails to obtain good parameters. 
In Fig. \ref{fig:fpo_633_10}, $3^\mathrm{rd}$, $5^\mathrm{th}$, $6^\mathrm{th}$ and $9^\mathrm{th}$ instance of the training shows the typical failures. 
Also, in Fig. \ref{fig:fpo_633_10_20} we observe FPO for $q = 20$ random initialization. The amplification of the solution state is still incomparable with the UPO.
We show further a success rate comparison of UPO and FPO in the Appendix \ref{ap:appendix4}.\\
The cost function of Eq. \ref{eq:cost_func} is essentially the expectation value of the observable in Eq. \ref{eq:cost_hamiltonian}. 
The state over which we calculate the expectation value is the state prepared by the PQC. 
Thus, the cost function to be minimized is
\begin{align}
    f(\psi) = \braket{\mathbf{0}|{U^\dagger_g} {U^\dagger_f}^p {U^\dagger_m}^p H_f U^p_m U^p_f U_g|\mathbf{0}},
\end{align}
where, $\ket{\mathbf{0}}$ is the all-zero state and $\psi$ refers to string representation of the state $\ket{\psi}$ prepared by the PQC, i.e.
\begin{align}
    \label{eq:land_state}
    \ket{\psi} = \ket{\psi(\beta,\gamma)} = U^p_m(\beta_o) U^p_f(\gamma_o) U_g \ket{\mathbf{0}}.
\end{align}
\begin{figure}[hbt!]
     \centering
     \begin{subfigure}[b]{0.49\textwidth}
         \centering
         \includegraphics[width = \textwidth]{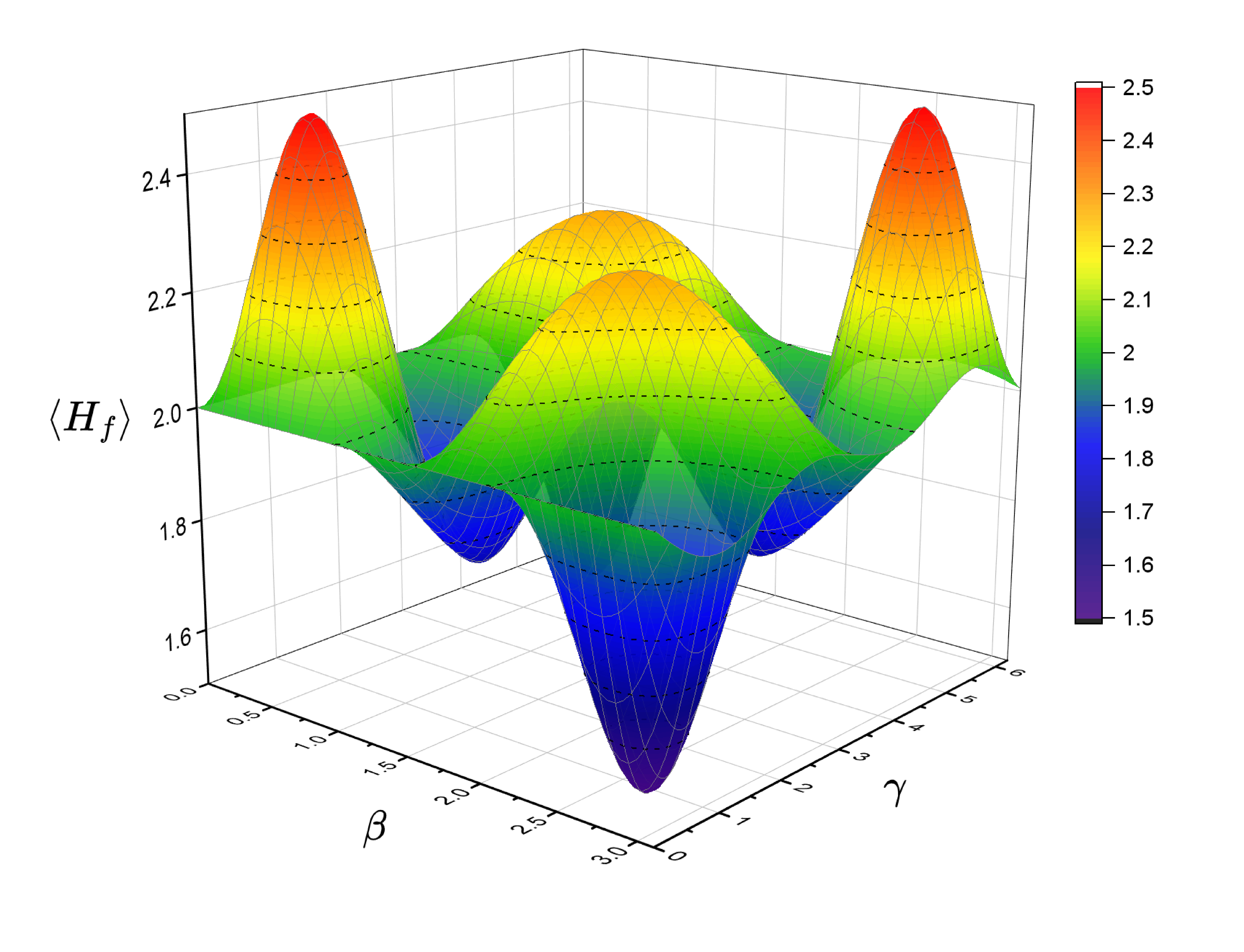}
         \caption{Cost function landscape for the $[3,2,1]$ code with $p = 3$.}
         \label{fig:land_321_3}
     \end{subfigure}
      \hfill
     \begin{subfigure}[b]{0.49\textwidth}
         \centering
          \includegraphics[width = \textwidth]{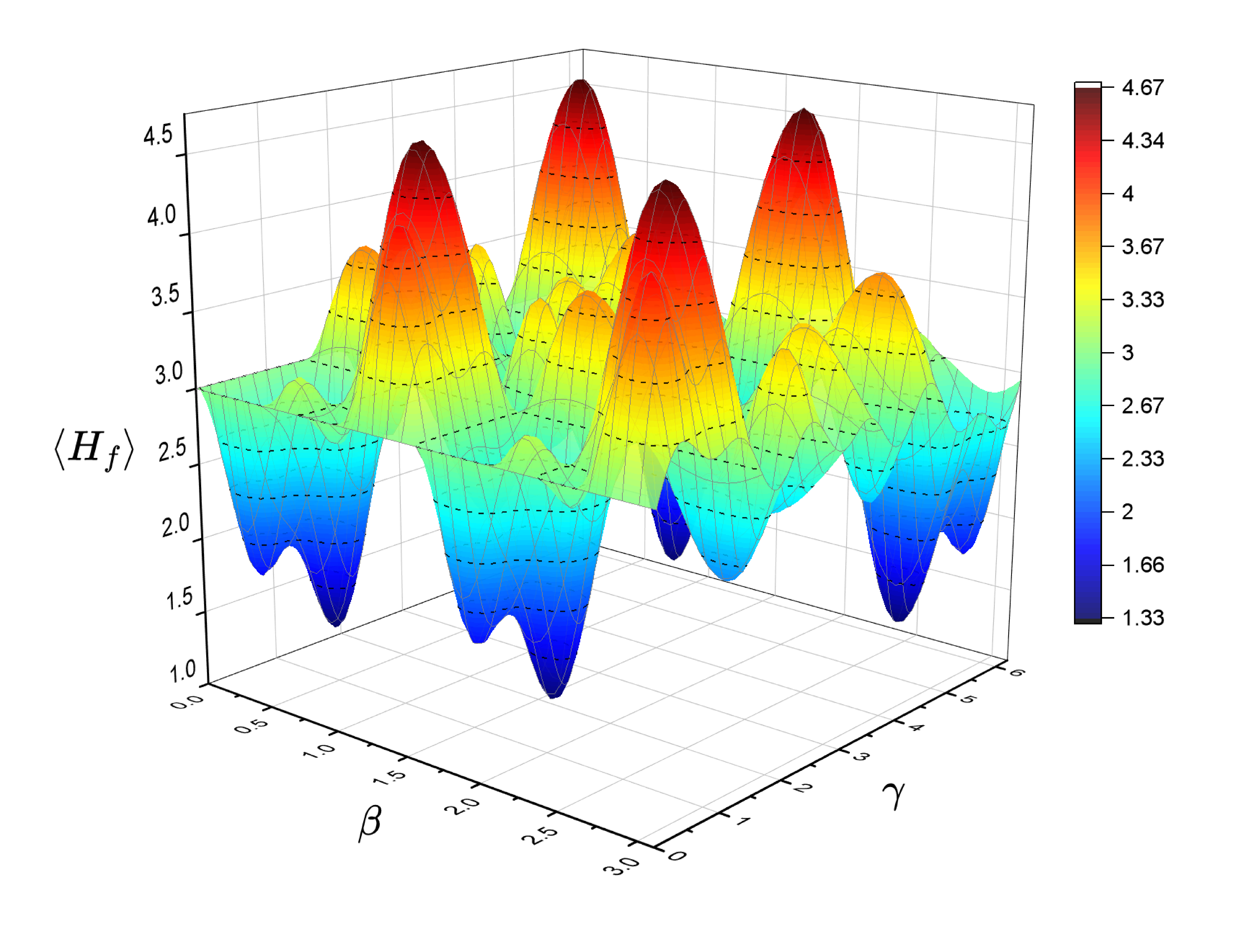}
         \caption{Cost function landscape for the $[6,3,3]$ code with $p = 3$.}
         \label{fig:land_633_3}
     \end{subfigure}

        \caption{It is a landscape of the expectation value of the observable cost Hamiltonian $H_f$ defined in Eq. \ref{eq:cost_hamiltonian}, over the state prepared by the UPO trained PQC Eq. \ref{eq:land_state}. The colour map signifies different surface heights for different colors. For both of the cases we observe no flat landscape.}
        \label{fig:lansdscape_p3}
\end{figure}
\begin{figure}[h!]
     \begin{subfigure}[b]{0.49\textwidth}
         \centering
         \includegraphics[width = \textwidth]{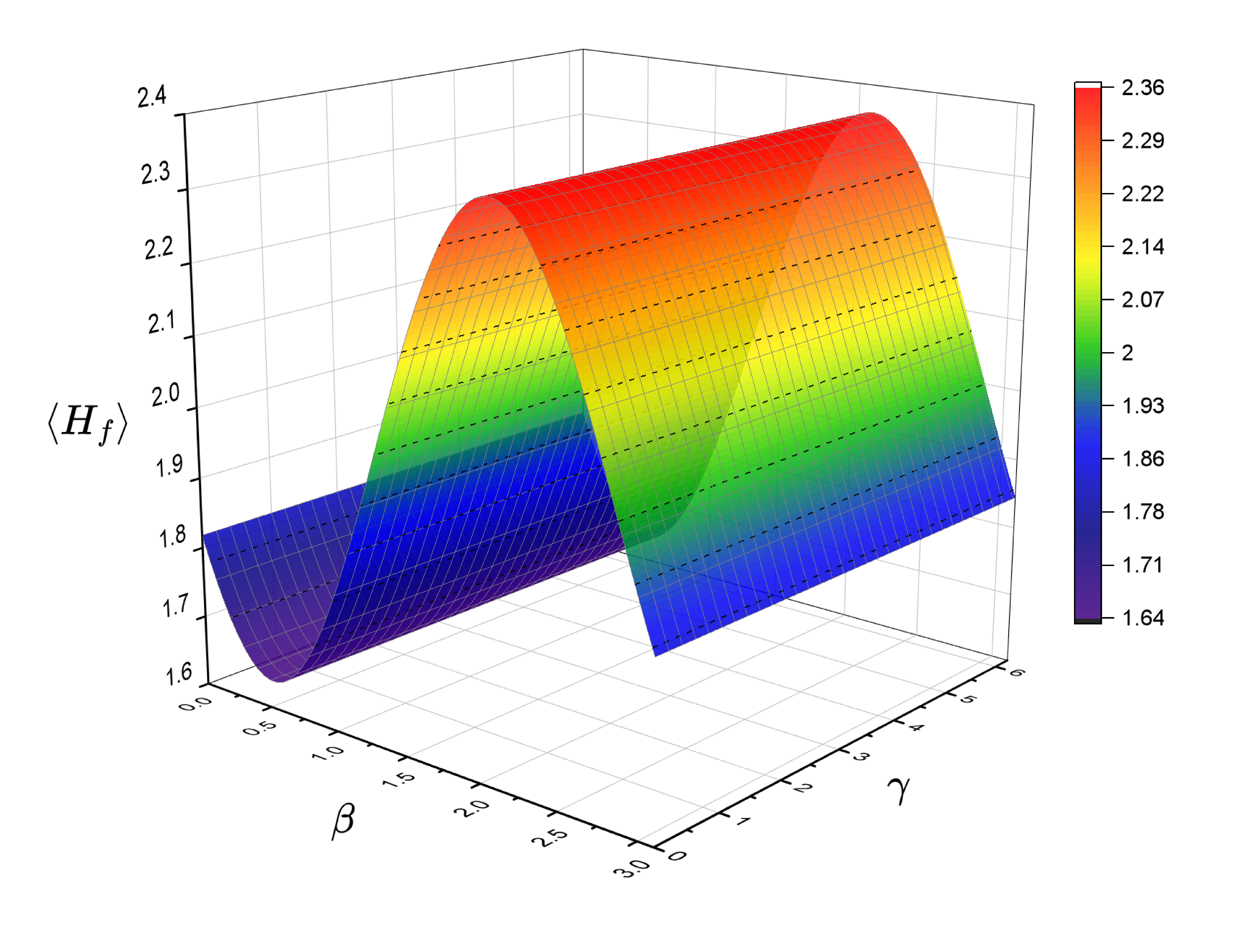}
         \caption{Cost function landscape for the $[3,2,1]$ code with $p = 3$.}
         \label{fig:land_321_fpo}
     \end{subfigure}
     \hfill
     \begin{subfigure}[b]{0.49\textwidth}
        \centering
        \includegraphics[width = \textwidth]{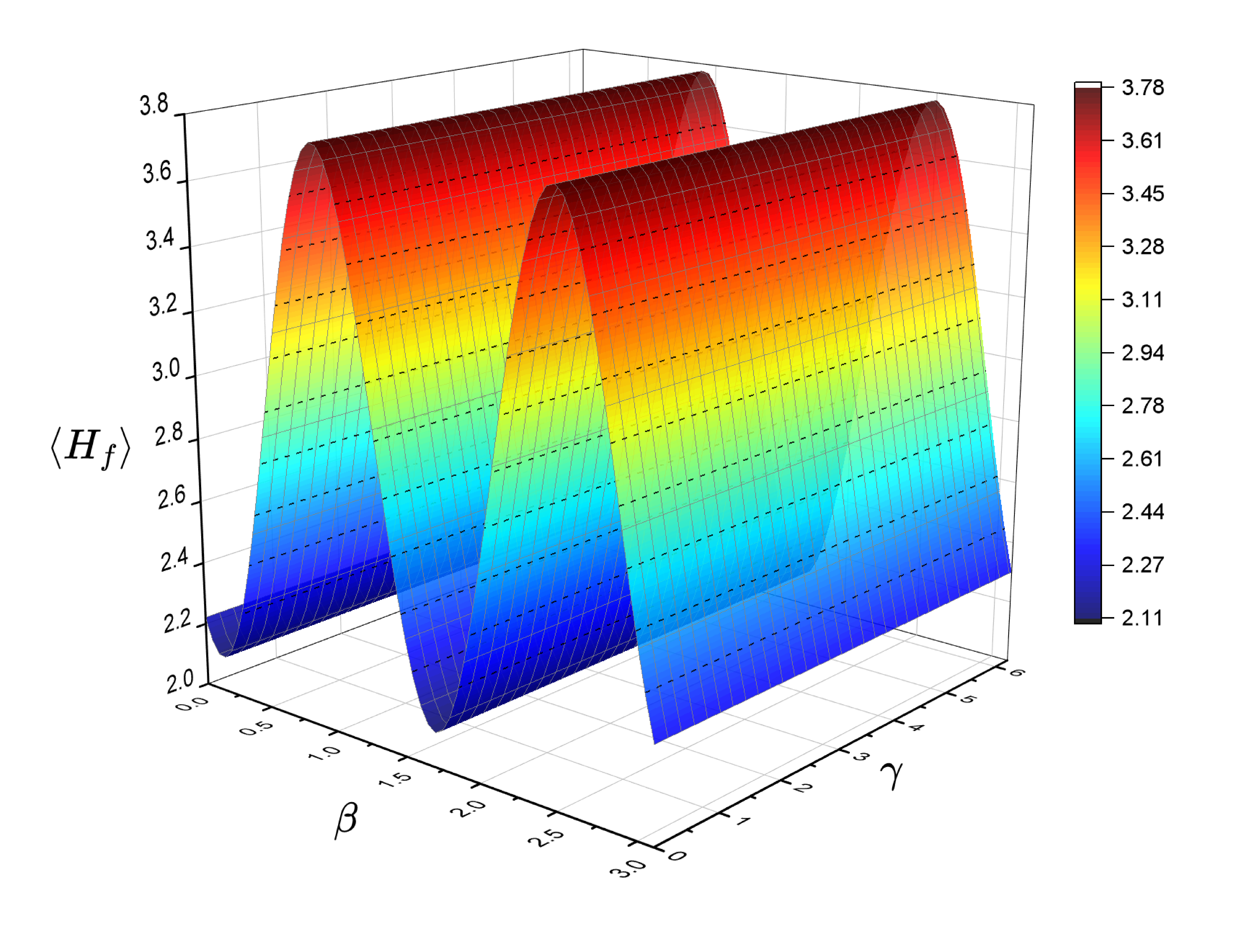}
         \caption{Cost function landscape for the $[6,3,3]$ code with $p = 3$.}
         \label{fig:land_633_fpo}
         \end{subfigure}
      \caption{A typical approximation of the cost function landscape for the $[6,3,3]$ and $[3,2,1]$ code with $p = 3$. The PQC is trained using FPO. We optimize $\beta_1,\beta_2$ and $\gamma_1, \gamma_2$ using FPO. Fixing them in those optimized values, we observe the landscape due to the $\beta_3$ and $\gamma_3$. Along gamma as we observe strips of zero gradients. These are shown by the strips of single colours along $\gamma_3$. It signifies $\braket{\frac{\partial f}{\partial \gamma_3}} = 0$.}
     \label{fig:fixed_land_633}
\end{figure}
% \begin{figure}[hbt!]
%     \centering
%      \includegraphics[height= 7.65cm, width = 0.5\textwidth]{land321_p15_300.png}
%       \caption{Cost function landscape for the $[3,2,1]$ code with $p = 15$. This is the landscape obtained using UPO. Here we also obtain a non uniform optimization landscape.}
%       \label{fig:land_321_10}
% \end{figure}
In Fig. \ref{fig:lansdscape_p3}, we show the landscape of this cost function for $[3,2,1]$ and $[6,3,3]$ code, which results from the uniformly parameterized quantum circuit.
The optimization landscape is not flat and thus offers suitable optimization. 
This non-flat landscape explains the better efficiency of the UPO strategy. 
For the FPO, due to random initialization at each layer of unitaries the landscapes are expected to show more flat region. 
We visualize how barren plateaus still occur in the proposed hybrid Viterbi decoder if fixed parameter optimization is used.
For this we chose to work with the $[6,3,3]$ code. 
As the $6$ variable function can not be visualized with ease, we try to fix the first 4 optimized variables and plot the landscape for $\beta_3$ and $\gamma_3$. 
This help us to analyse the occurrence of vanishing gradients due to $\beta_3$ and $\gamma_3$. 
We observe in Fig. \ref{fig:fixed_land_633}, for a fixed $\beta_3$ there are strips of zero gradients arising due to the parameter $\gamma_3$.
% This is one of the reasons of barren plateau.
Thus, fixing the parameters through a layerwise optimization offers no improvement in efficient optimization than random sampling.
We perform another set of experiments to test the decoders for a higher depth instance of $p = 15$ in Fig. \ref{fig:upo vs fpo 2}.
% \subsection{Performance Analysis of the UPO for larger layer ($p = 15$)}
% We also test the optimization strategy for $p = 15$. 
% In this case, also we observe that taking the same parameters for each unitary helps the classical optimizer find good parameters. 
% We show the results in Fig. \ref{fig:upo vs fpo 2}. 
% In the case of higher-depth PQC also, we observe no flat landscapes that may cause trouble for the classical optimizers. 
% We show the optimization landscape for higher depth PQC in Fig. \ref{fig:land_321_10}. 
We now show the decoding results for other linear block codes and verify the multiple path tracing, all of which has least path metric.
\begin{figure}[ht!]
     \centering
     \begin{subfigure}[b]{0.48\textwidth}
         \centering
         \includegraphics[width = \textwidth]{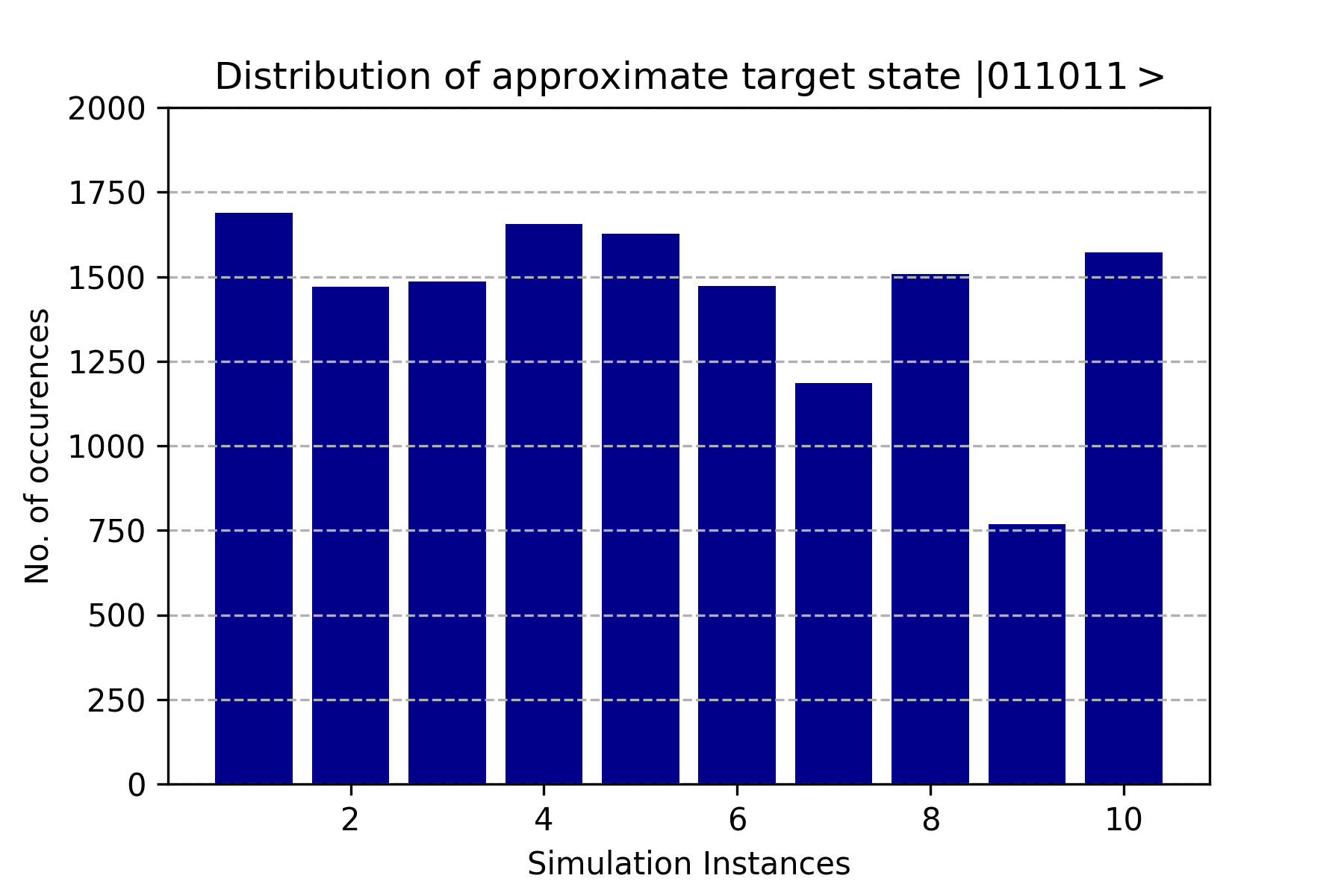}
         \caption{UPO}
         \label{fig: upo_633_depth_15}
     \end{subfigure}
      \hfill
     \begin{subfigure}[b]{0.48\textwidth}
         \centering
         \includegraphics[width = \textwidth]{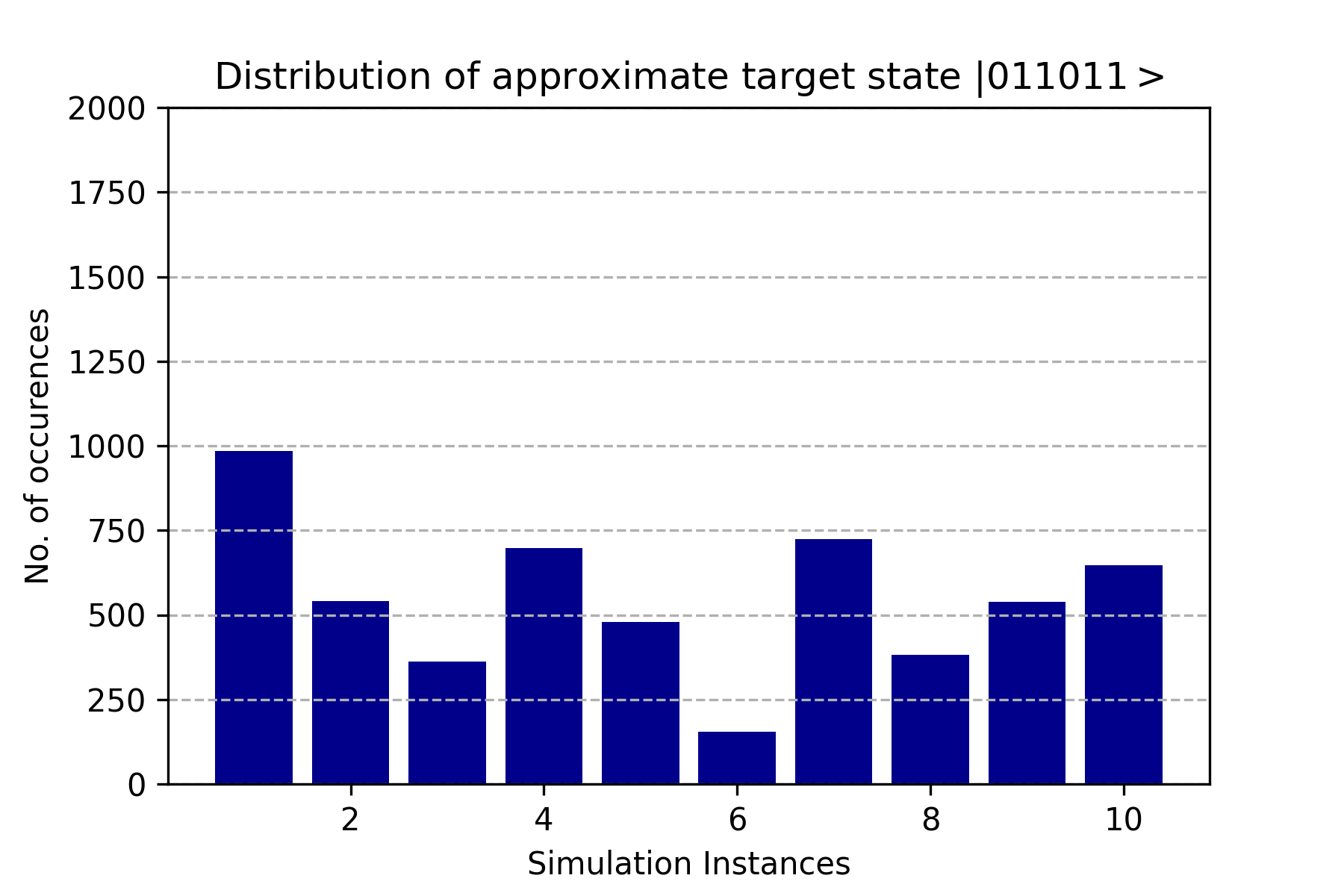}
         \caption{FPO}
         \label{fig:fpo_633_depth_15}
     \end{subfigure}

        \caption{Comparison between Uniform and fixed optimization strategy for a high depth PQC. 
        The number of unitary repetitions is $15$ in this PQC. 
        We have shown $10$ optimization instances and traced out the probability of occurrence  for the actual state with minimum path metric. 
        We choose $\mathcal{C} = [6,3,3], p = 15, q = 5$.}
        \label{fig:upo vs fpo 2}
\end{figure}
\subsection{Performance of the Hybrid Decoder}
\label{sec:Performance of the Hybrid Decoder}
We discuss the uniform parameter optimization technique to determine good parameters for the parameterized quantum circuit. 
After obtaining the parameters using UPO, we put them in the corresponding unitary layers and produce $\ket{\psi(\gamma, \beta)}$ of Eq. \ref{eq:param_unitary}. 
The operation of the hybrid Viterbi decoder is summarized in the following 3 steps:
\begin{enumerate}
    \item Prepare an initial superposition of all the valid paths in the trellis using a generator unitary $U_g$, same as in Eq. \ref{eq:in_state}.
    \item Prepare a parameterized quantum circuit having fixed depth $p$, using $U_f(\gamma_o)$ and $U_m(\beta_o)$, $p$ times
    and train it using uniform parameter optimization strategy.
    \item Use the parameters obtained from the training of the PQC and measure the qubits.
\end{enumerate}
\begin{figure}[hbt!]
     \centering
     \begin{subfigure}[b]{0.49\textwidth}
         \centering
         \includegraphics[width = \textwidth]{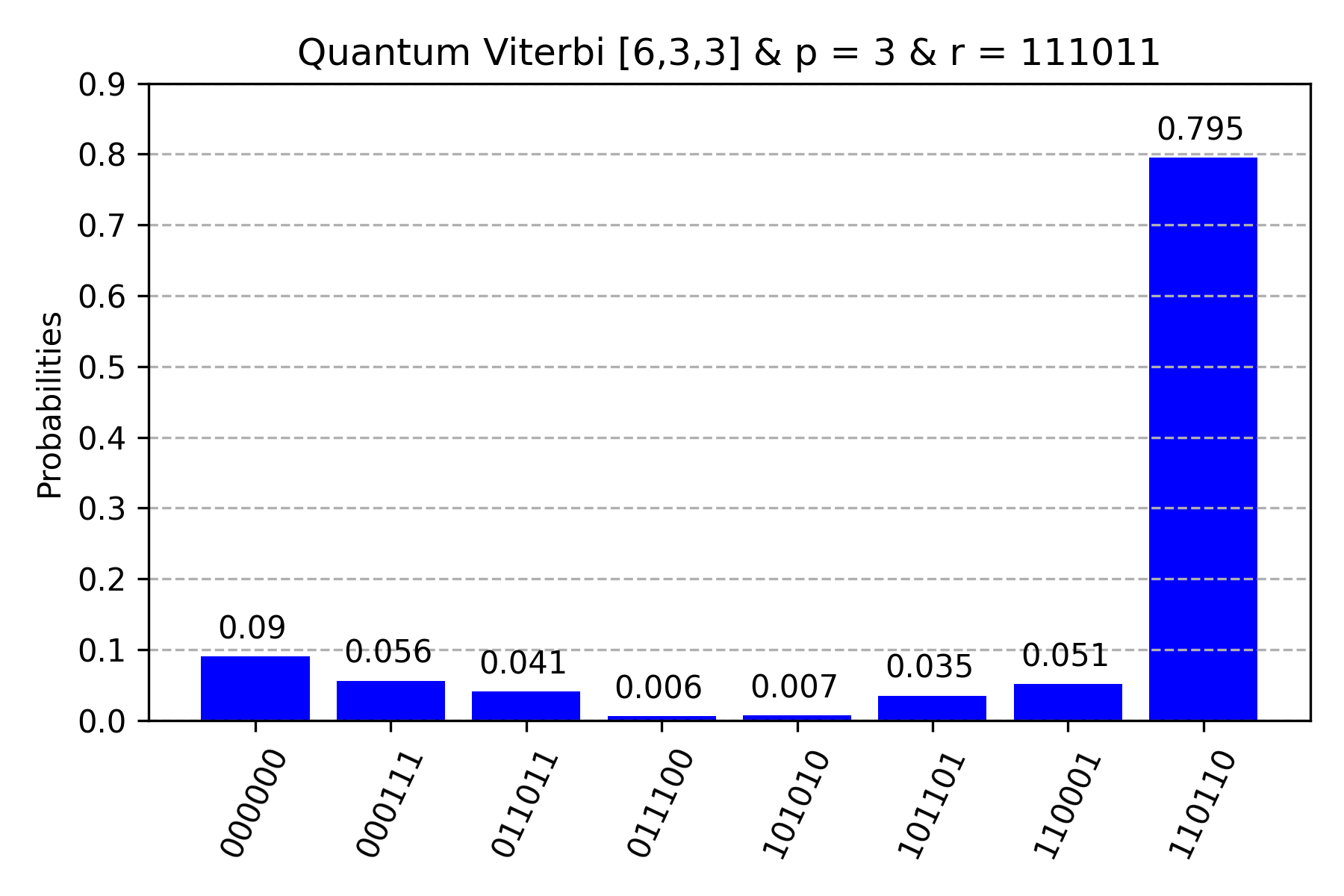}
         \caption{Solution found using UPO.}
         \label{fig: sol_633_uni_1}
     \end{subfigure}
      \hfill
     \begin{subfigure}[b]{0.49\textwidth}
         \centering
         \includegraphics[width = \textwidth]{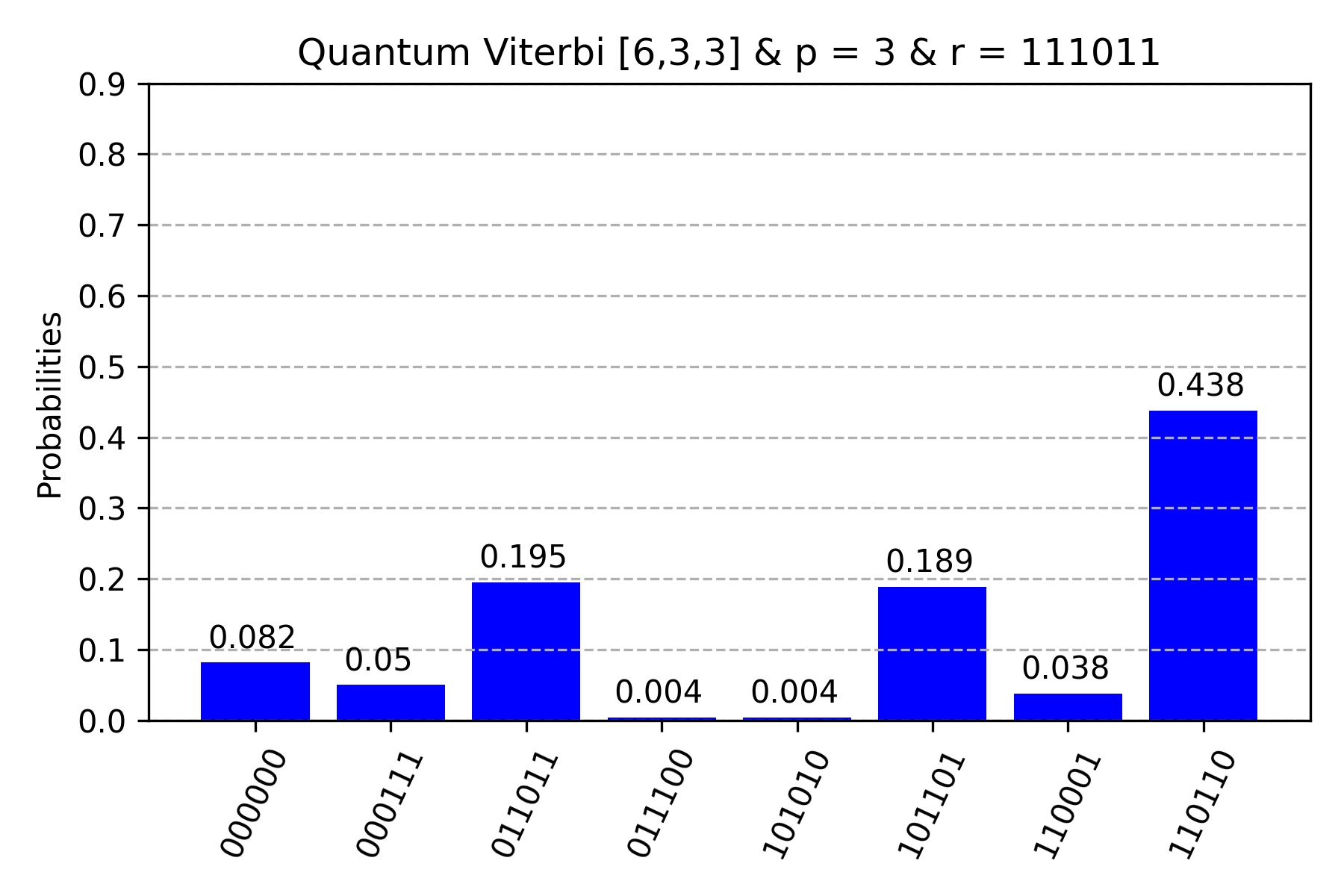}
         \caption{Solution found using FPO.}
         \label{fig:sol_633_fix_1}
     \end{subfigure}

        \caption{Results of the simulation obtained for Viterbi decoding by the QAOA using UPO. The code is $[6,3,3]$. We label the states with convention that the most significant bit is at the bottom. We apply $p = 3$ number of unitary layers. We sample out of $q = 5$ randomly initialized samples. Received vector is $y = 111011$. The amplified state refers to the decoded codeword or the path with the least metric in the trellis. For a fixed $p$ and $q$, UPO offers better amplification than FPO.}
        \label{fig:upo vs fpo soln}
\end{figure}
In Fig. \ref{fig:upo vs fpo soln}, we show the solution obtained with high probability for a $[6,3,3]$ code upon receiving an erroneous vector $y=111011$. 
The state $\ket{011011}$ has the highest probability of occurrence and thus it is the solution we obtain using QAOA. 
The decoded path has a path metric $d(011011,y) = 1$.
We note that this is indeed the path with minimum path metric corresponding to the received erroneous vector. 
We also show the amplification difference between the two optimization techniques, which again points out the more acceptance of the UPO compared to FPO.
We also apply the algorithm for a $[9,4,4]$ code upon receiving an erroneous vector $y = 111011011$.
The result is shown in Fig. \ref{fig: sol_944_uni_1}. 
Here, we again use UPO to train the parameterized quantum circuit. 
The PQC we use has 3 unitary layers.
The decoded codeword has unit path metric and is also the minimum among others.
In another case where we took a $[3,2,1]$ code. 
Codespace of this code is $\mathcal{C} = \{000, 010, 101, 111\}$. 
For this code, if we receive the erroneous vector $y = 011$, the algorithm returns two states with high probability $\ket{010}$ and $\ket{111}$ as shown in Fig. \ref{fig:sol_32_uni_1}. 
This suggests there are two paths with the least path metric, and this indeed matches the theory. 
It signifies that every path which has the minimum metric among all the paths will be opted out as the solution by the hybrid decoder with equal probability. 
\begin{figure}[hbt!]
     \centering
      \begin{subfigure}[t]{0.49\textwidth}
         \centering
         \includegraphics[width =\textwidth]{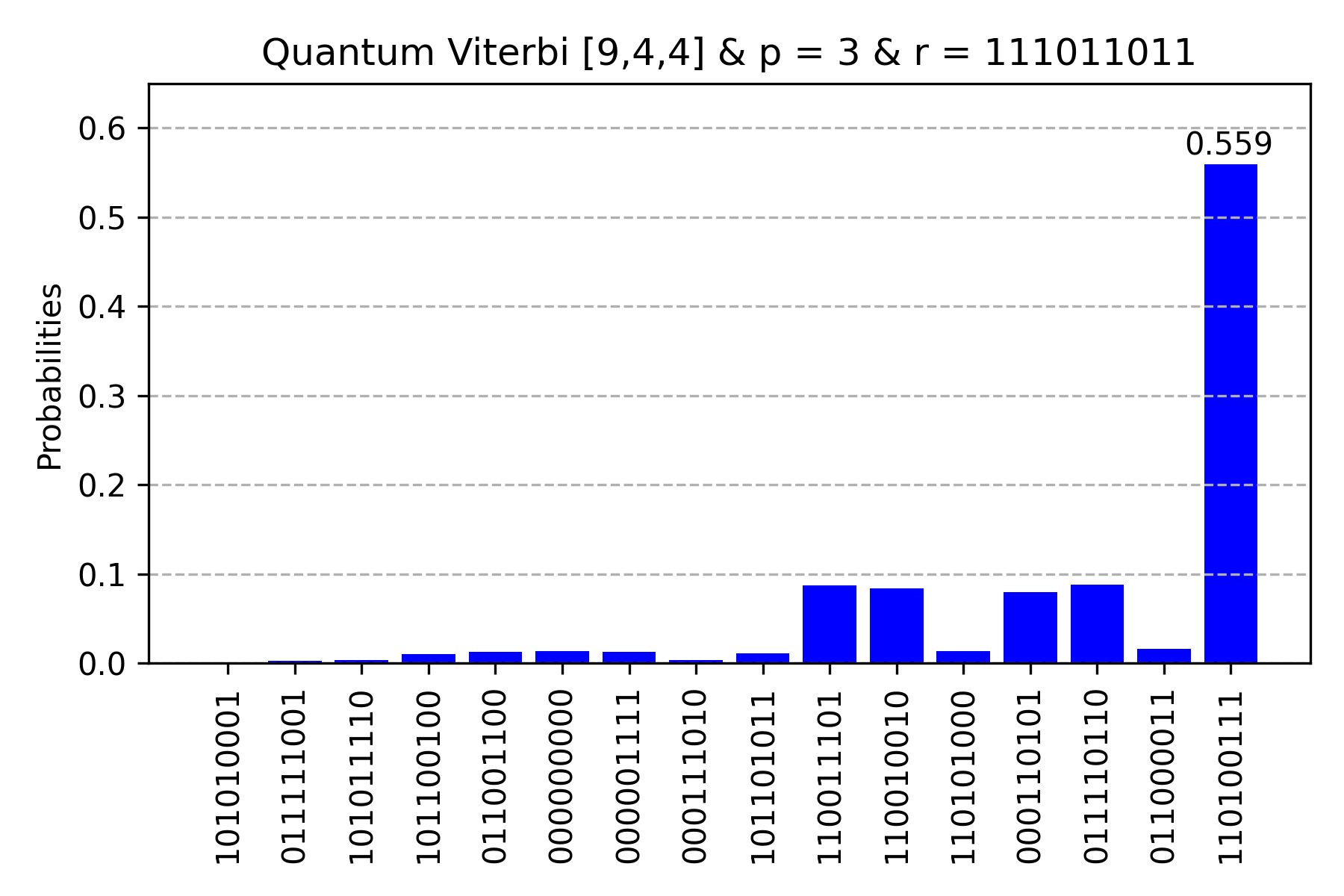}%{solution_944_uni_1.png}
         \caption{$[9,4,4]$ code}
         \label{fig: sol_944_uni_1}
         \end{subfigure}
% \end{figure}
      \hfill
      \begin{subfigure}[t]{0.49\textwidth}
% \begin{figure}[hbt!]
         \centering
         \includegraphics[width =\textwidth]{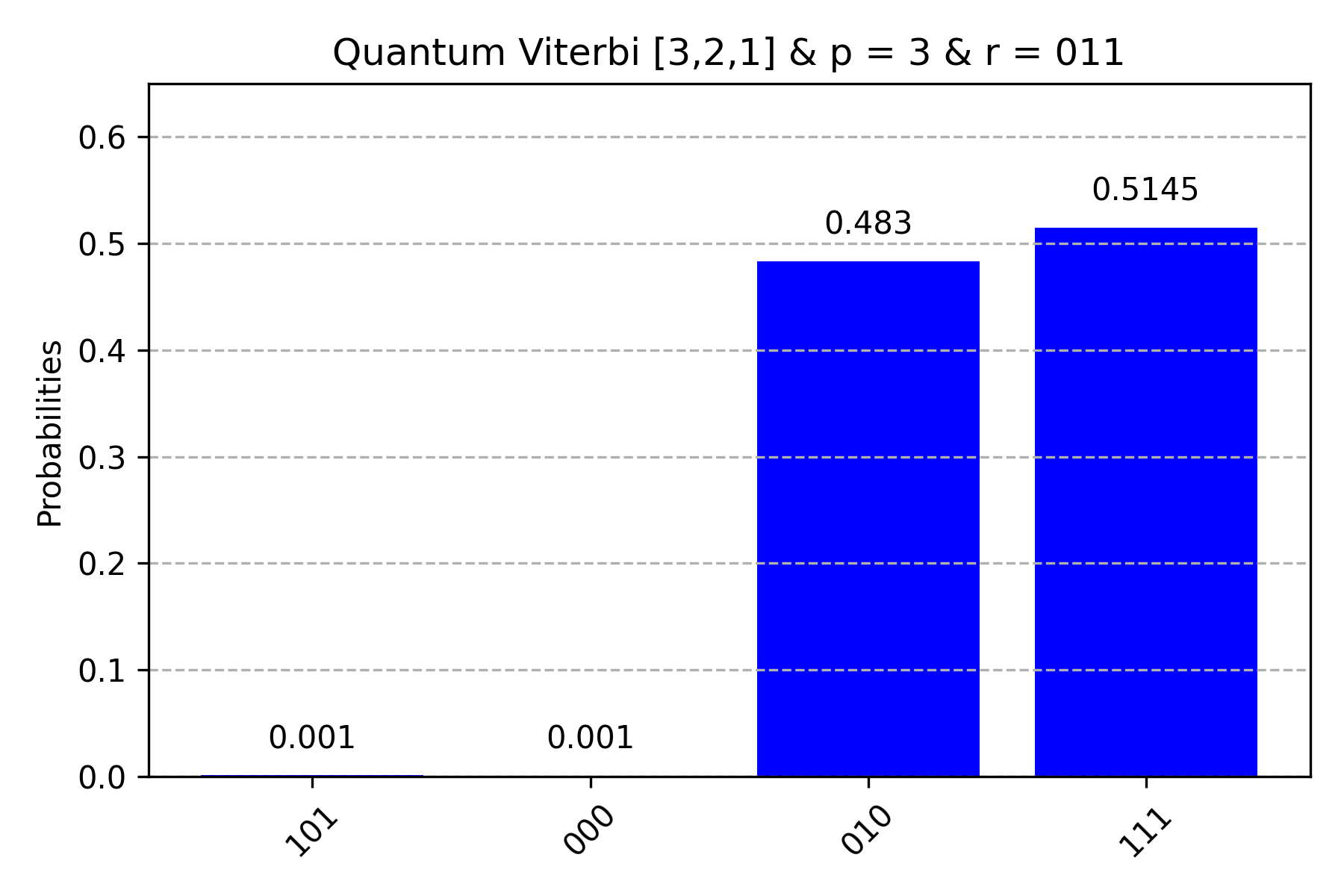}
          \caption{$[3,2,1]$ code}
          % The codespace is $\mathcal{C} = \{000,010,101,111\}$.}
         \label{fig:sol_32_uni_1}

        % \caption{We have applied the QAOA for a $[9,4,4]$ code $\And$ a $[3,2,1]$ code. In both cases $p = 3$ $\And$ $q = 5$. In \textbf{a} we got the codeword $111001011$ or path which has the least hamming distance $1$, with respective to recieved $r = 111011011$. In \textbf{b} the decoder estimates two paths having minimum path metric corresponding to recieved $r = 011$.}
        \label{fig: upo soln}
        \end{subfigure}
        \label{fig:misc}
        \caption{We use UPO for the hybrid decoder of a $[9,4,4]$ code in (a) and $[3,2,1]$ code in (b).
        UPO uses $p = 3$ ,$q = 5$.
        We label the states with convention that the most significant bit is at the bottom.
        For the $[9,4,4]$ code, the decoder estimates the path $111001011$ having minimum path metric; corresponding to received $y = 111011011$. The path metric is $d(111001011,y) = 1$.
        For the $[3,2,1]$ code, the decoder estimates two paths having the least path metric corresponding to the received $y = 011$. 
        The path metric is $d(010,y) = d(111,y) = 1$.}
\end{figure}
% \section{Application to Convolution Codes}
% \label{sec:Application to Convolution Codes}
\begin{figure}
    \centering
    \includegraphics[width=0.65\textwidth]{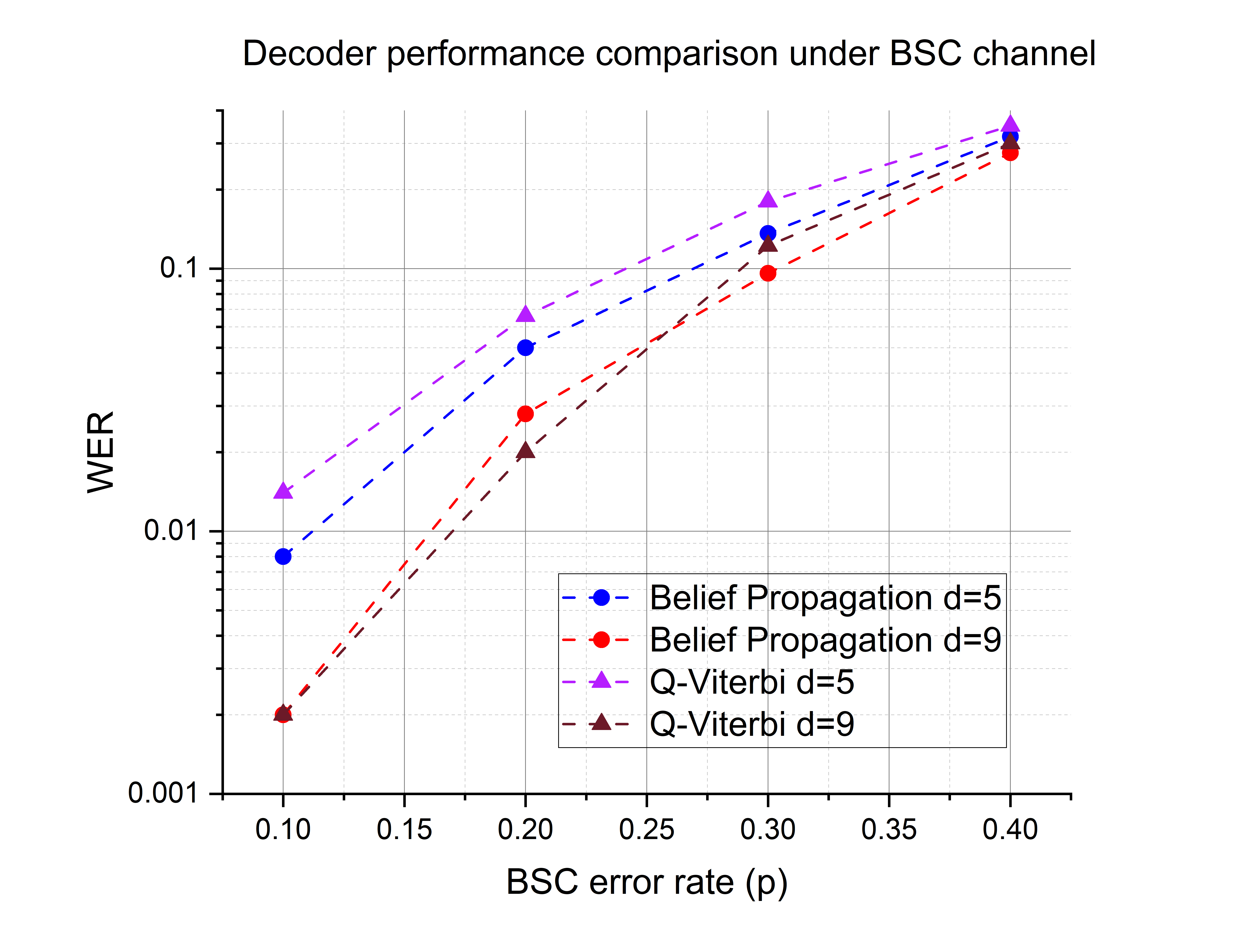}
    \caption{We compare the performance of the hybrid decoder (here referred as Q-Viterbi in short) with the belief propagation decoder.
    We benchmark the word error rates (WER) of repetition codes under the binary symmetric channel (BSC). 
    We observe an almost comparable performance of both the decoders, where the slight difference in the word error rates are due to statistical errors appeared as a result of the limited number of sampling instances.}
    \label{fig:wer-rep-code}
\end{figure}
We further benchmark the performance of the proposed hybrid decoder with standard \textit{min-sum} belief propagation decoder.
We use classical repetition code, which has only two possible codewords $\mathcal{C} = \{0^d, 1^d\}$ for a distance $d$ code.
According to theorem \ref{theorem_mixer}, the mixer Hamiltonian has only one term i.e. $U_m = \mathrm{X}_1 \mathrm{X}_2 .... \mathrm{X}_d$.
The parity check matrix of a repetition code has no cycles and therefore belief propagation decoder converges to exact error pattern, which satisfy the syndrome.
Also, as our hybrid decoder is essentially finding the minimum Hamming weight codeword,
the hybrid decoder also should predict every correctable error patterns satisfying the syndrome.
Therefore, we benchmark the word error rates obtained for both the decoders under various bit flip probability of a binary symmetric channel.
We observe a very similar performance of the classical and hybrid decoder.
This warrant the applicability of the hybrid decoder as an active decoder for classical error correction.
\section{Conclusion and future research}
We proposed a Quantum Approximation Optimization algorithm, which uses parameters uniformly optimized for the mixer and cost unitaries in the PQC. 
We showed through simulations that taking the parameters uniformly helps better optimization than taking them non-uniformly.
We used this variational approach to construct a hybrid Viterbi decoder.
We observed that the constructed hybrid decoder opts out any number of shortest paths present in the trellis as a solution. This quantum-classical decoder shares the workload and it even works well for shallow unitary depths.
We showed that the structure of the unitaries depend on the type of classical error-correcting code.
Depending on the classical code the hybrid decoder is dealing with, the complexity of both the cost and mixer Hamiltonian might increase.
For instance, the locality and the structure of mixer $U_m$ depends on the variable `$v$', as discussed in Theorem \ref{theorem_mixer}.
These will reflect on the circuit depth of the PQC and might lead to a hindrance for it's near-term implementation.
Thus, the hybrid decoder in general will require assistance from quantum error-correcting codes to be implemented in NISQ devices such that large classical decoding problems can be addressed.\\
% \section{Future Research}
We point out future work that can be extended on this idea.
Regarding the near-term implementation, we propose a possible connection with the work related to a modified variational quantum algorithm \cite{zhang2022variational}.
On a similar note, it would be valuable to test the QAOA-based Viterbi decoder on larger codes to evaluate the performance of UPO. 
For small codes, we demonstrate that UPO outperforms both the existing random initialization and FPO. 
For large codes, where the vast number of parameters renders QAOA-based decoding intractable, our proposed method is naturally expected to perform better than the two other alternatives. 
Currently, we are constrained by the resources available in the \textit{qiskit qasm simulator} and thus we reserve the scalability experiment of the hybrid decoder as a near future work.
Further, finding the shortest path using a variational approach can be extended beyond classical encoding-decoding problems. Inference problems can also be approached using the method we adopted. 
One can design a unitary to produce an equal superposition of all the possible solutions. 
Depending on the memory of the system, one can try to minimize certain path metrics related to the problem to suggest a suitable solution.
Although for search problems Grover's search is optimal, this algorithm might be an alternative in terms of circuit depth. The quantum-classical hybrid approach will generally require low-depth quantum circuits compared to the Grover's search.

\section{Data availability}
All the scripts used for generating the numerical results are available from the corresponding author upon reasonable request.

% \newpage

\bibliography{mybibliography}

\appendix
\section{Appendix}
% We elaborate the circuits used in each blocks of unitaries used in the circuits of Fig.\ref{fig:unitlayer633}
\subsection{Mapping of a Hermitian and it's time evolution unitary}
\label{ap:appendix1}
We elaborate one of the steps we used to proof Theorem \ref{theorem_mixer}. In Eq. \ref{eq:span of mixer}, we implied \begin{align*}
        e^{-i\beta H_m}\ket{\psi_{in}}  \in \mathcal{C}
        \implies & H_m\ket{\psi_{in}}  \in \mathcal{C}.
\end{align*}
Now as $H_f$ and $H_m$ are non commuting, $\ket{\psi_{in}}$ is not an eigenstate of $H_m$. 
Although, the eigenstates of the Hermitian $H_m$ forms a basis and therefore $\ket{\psi_{in}}$ can be expressed as a linear combination of them.
\begin{align}
    \ket{\psi_{in}} = \sum_{i=1}^{n}a_i\ket{E_m}_i,
    \label{eq:a_psiin}
\end{align}
where $\ket{E_m}_i$ are the eigenstates of $H_m$ and $a_i$ are the coefficients. 
Also $\ket{\psi_{in}} \in \mathcal{C}$.
Now the operator $U_m = e^{-i\beta H_m}$ essentially is the time evolution operator, with $\beta$ as the time parameter. We can express the state after the operation of $U_m$ as:
\begin{align}
    \psi(\beta) = U_m\ket{\psi_{in}}
                = e^{-i\beta H_m}\ket{\psi_{in}}.
    \label{eq:a_psibeta}
\end{align}
Any exponential of a matrix can be expressed as a power series expansion,
\begin{align}
    e^{-i\beta H_m} = \mathrm{I} - i\beta H_m - \frac{{\beta}^2}{2!} H^2_m + .....
    \label{eq:power}
\end{align}
This power series converges for certain power of the Hamiltonian. 
Now as $\ket{\psi_{in}}$ is not an eigenstate of $H_m$. Eq. \ref{eq:a_psibeta} cannot be expressed in a closed form. But $\ket{E_m}_i$ are eigenstates of the Hamiltonian $H_m$ thus,
\begin{align}
    U_m\ket{E_m}_i &= e^{-i\beta H_m}\ket{E_m}_i,\nonumber \\
                   &= (\mathrm{I} - i\beta H_m - \frac{{\beta}^2}{2!} H^2_m + .....)\ket{E_m}_i, \nonumber \\
                   &= (\mathrm{I} - i\beta {E_m}_i - \frac{{\beta}^2}{2!} {E^2_m}_i + .....)\ket{E_m}_i, \nonumber \\
                   &= e^{-i\beta {E_m}_i}\ket{E_m}_i,
    \label{eq:ue}
\end{align}
where ${E_m}_i$ is the eigenvalue of $H_m$ corresponding to the eigenstate $\ket{E_m}_i$.
Now we can use all these tools to show why both the time evolution unitary and the corresponding hermitian has the same mapping. According to Theorem \ref{theorem_mixer}, $H_m\ket{\psi_{in}} \in \mathcal{C}$. 
As the linear block code is linear, following Eq. \ref{eq:a_psiin} it is obvious that ${E_m}_i \in \mathcal{C}$. Now using Eq. \ref{eq:ue} and completeness of the eigenstates of $H_m$ in Eq. \ref{eq:a_psibeta}, we have
\begin{align}
    \psi(\beta) &= e^{-i\beta H_m}\ket{\psi_{in}}, \nonumber \\
                &= \sum_{i}e^{-i\beta H_m}\ket{E_m}_i\bra{E_m}_i\ket{\psi_{in}}, \nonumber \\
                &= \sum_{i}e^{-i\beta {E_m}_i}\ket{E_m}_i\bra{E_m}_i\ket{\psi_{in}}, \nonumber \\
                & = \sum_{i}a_i e^{-i\beta {E_m}_i}\ket{E_m}_i.
    \label{eq:append_last}
\end{align}
Eq. \ref{eq:append_last} proves that $\psi(\beta) \in \mathcal{C}$. Expressing the initial state as a superposition of the eigenstates $H_m$ only adds phases into the corresponding terms in the superposition when $U_m$ is applied to it. Thus, the unitary $U_m$ and the Hamiltonian $H_m$ have the same mapping on states. 
% \newpage
\subsection{The cost Hamiltonian}
\label{ap:appendixproof}
Here we show the derivation for the cost Hamiltonian $H_f$.
Consider a Boolean function $ C(c): {\{-1,1\}}^n \to \{-1,1\}$. The Fourier expansion of the function is \cite{boolean_func}
\begin{align}
    \label{eq:furie_r}
    C(c) & = \sum_{S\subset [n]} \hat{C}(s)\prod_{i \in S}c_i, \\
        & = \sum_{S\subset [n]} \hat{C}(s) \chi_{S},
\end{align}
where $\chi_{S} = \prod_{i \in S}c_i$, is the parity function and can be mapped to Pauli-$\mathrm{Z}$ operators of the form $\prod_{i \in S} \mathrm{Z}_i$
% $[n]$ is defined as the set of $n$ elements, such that, 
and $[n] = \{-1,1\}^n$.
This mapping is obvious when we work in the computational basis states, since $\ket{0}$ and $\ket{1}$ are the eigenstates of Pauli-$\mathrm{Z}$ with eigenvalues $+1$ and $-1$ respectively.
Thus, the Hamiltonian corresponding to the Boolean function $C(c)$ is
\begin{align}
    H_C = \sum_{S\subset [n]} \hat{C}(s) \prod_{i \in S} \mathrm{Z}_i,
    \label{eq:hamiltonian_map}
\end{align}
where $\hat{C}(s)$ are the Fourier coefficients and solely depend on the value of the function. 
Eq. \ref{eq:furie_r} is a real multilinear polynomial, which is further simplified as
\begin{align}
    \label{eq:fourier}
    C(c) & = \sum_{a\in \{-1,1\}^n} C(a) \mathbb{I}(c), \nonumber \\
         & = \sum_{a\in \{-1,1\}^n}C(a)\left(\frac{1+a_1c_1}{2}\right)\left(\frac{1+a_2c_2}{2}\right) \cdots \left(\frac{1+a_nc_n}{2}\right),
\end{align}
where the sum runs over all possible combinations of $a$.
If $C: \{-1,1\}^2 \to \{-1,1\}$, we define one such function as 
\begin{align}
C(c) = c_1 \oplus y_1,
\label{eq:boolean_clause}
\end{align}
where $c = (c_1,y_1)$.
Now Eq. \ref{eq:fourier} becomes,
\begin{align}
    \label{eq:fu_ex}
    C(c) =& \,C(-1,-1)\left(\frac{1-c_1}{2}\right)\left(\frac{1-y_1}{2}\right) + C(-1,1)\left(\frac{1-c_1}{2}\right)\left(\frac{1+y_1}{2}\right)\nonumber \\ & + C(1,-1)\left(\frac{1+c_1}{2}\right)\left(\frac{1-y_1}{2}\right) + C(1,1)\left(\frac{1+c_1}{2}\right)\left(\frac{1+y_1}{2}\right).
\end{align}
Putting the value of $C(c)$ as defined in Eq. \ref{eq:boolean_clause}, the Fourier expansion of Eq. \ref{eq:fu_ex} becomes
\begin{align}
    \label{eq:cx}
    C(c) =&\, - 1\left(\frac{1-c_1}{2}\right)\left(\frac{1-y_1}{2}\right) + 1\left(\frac{1-c_1}{2}\right)\left(\frac{1+y_1}{2}\right) \nonumber \\ 
    & + 1\left(\frac{1+c_1}{2}\right)\left(\frac{1-y_1}{2}\right) - 1\left(\frac{1+c_1}{2}\right)\left(\frac{1+y_1}{2}\right), \nonumber \\
    & = -c_1y_1.
\end{align}
It can be easily verified from the above equation that the multilinear expansion satisfies the definition of $C(c)$, given in Eq. \ref{eq:boolean_clause}. 
If the range of the Boolean function is changed from $\{-1,1\}$ to \{0,1\}, the Fourier expansion of the function $C(c)$ derived in Eq. \ref{eq:cx} becomes \cite{boolean_func}
\begin{align}
    C^{\prime}(c) &= \frac{1 + C(c)}{2}, \nonumber \\
                & = \frac{1}{2}\left(1-c_1y_1\right).
    \label{eq:fun_c}
\end{align}
Now using the Hamiltonian mapping obtained in Eq. \ref{eq:hamiltonian_map}, we find the Hamiltonian for $C^{\prime}(c)$ becomes
% Thus, the diagonal Hamiltonian corresponding to Eq.\ref{eq:boolean_clause} is
\begin{align}
    \label{eq:ham_c}
    H_C = \frac{1}{2}\left(\mathrm{I}-\mathrm{Z}_c \mathrm{Z}_y\right).
\end{align}
This is also the cost Hamiltonian corresponding to the function $C(c)$, given in Eq. \ref{eq:boolean_clause},
because $C^{\prime}(c)$ is the Fourier expansion of $C(c)$.
We mention that the cost function for the Viterbi decoder $f(c)$ in Eq. \ref{eq:cost_func} is the weighted sum of $C(c)$ with all the weights are $1$. 
Thus, mapping the variables in Eq. \ref{eq:fun_c} to appropriate registers, i.e. $c_i \to x_i$ and $y_i \to r_i$ and then summing over $i$ from $1$ to $n$, Eq. \ref{eq:ham_c} becomes $H_c \to H_f = \displaystyle \sum_{i=1}^{n} \left(\frac{1}{2}\mathrm{I} - \frac{1}{2}{\mathrm{Z}_{x_i}}{\mathrm{Z}_{r_i}}\right)$.
This is the cost Hamiltonian of the Viterbi decoder we propose in Eq. \ref{eq:cost_hamiltonian}.
\newpage
\subsection{Circuit of the mixer for the [6,3,3] code}
\label{subsec:Circuit of the mixer}
The mixer for the $[6,3,3]$ linear block code is defined as $U_m(\beta) = e^{-i\beta H_m}$. $U_m(\beta)$ is expressed in Eq. \ref{eq:mixmix}. We show the circuit diagram of the mixer below.
\begin{figure}[h!]
% \centering
\advance\leftskip-0.5cm
\begin{quantikz}[font = \tiny]
    \lstick{$\ket{x_1}$} & \gate{\mathrm{H}} & \ctrl{1} & \qw & \qw & \qw & \ctrl{1} & \gate{\mathrm{H}} & \gate{\mathrm{H}} & \ctrl{4} & \qw & \qw & \qw & \ctrl{4} & \gate{\mathrm{H}} & \qw \hdots \hdots\\
    \lstick{$\ket{x_2}$} & \gate{\mathrm{H}} & \targ{} & \ctrl{1} & \qw & \ctrl{1} & \targ{} & \gate{\mathrm{H}} & \qw & \qw & \qw & \qw & \qw & \qw & \qw & \qw \hdots \hdots\\
    \lstick{$\ket{x_3}$} & \gate{\mathrm{H}} & \qw & \targ{} & \gate{R_\mathrm{z}(2\beta)} & \targ{} & \qw & \gate{\mathrm{H}} & \qw & \qw & \qw & \qw & \qw & \qw & \qw & \qw \hdots \hdots\\
    \lstick{$\ket{x_4}$} & \qw & \qw & \qw & \qw & \qw & \qw & \qw & \qw & \qw & \qw & \qw & \qw & \qw & \qw & \qw \hdots \hdots\\
    \lstick{$\ket{x_5}$} & \qw & \qw & \qw & \qw & \qw & \qw & \qw & \gate{\mathrm{H}} & \targ{} & \ctrl{1} & \qw & \ctrl{1} & \targ{} & \gate{\mathrm{H}} & \qw \hdots \hdots\\
    \lstick{$\ket{x_6}$} &  \qw & \qw & \qw & \qw & \qw & \qw & \qw & \gate{\mathrm{H}} & \qw & \targ{} & \gate{R_\mathrm{z}(2\beta)} & \targ{} & \qw & \gate{\mathrm{H}} & \qw \hdots \hdots\\
    \lstick{$\ket{x_1}$} \hdots \hdots & \qw & \qw & \qw & \qw & \qw & \qw & \qw & \qw & \qw & \qw & \qw & \qw & \qw & \qw & \qw &\\
    \lstick{$\ket{x_2}$} \hdots \hdots & \qw & \qw & \qw & \qw & \qw & \qw & \qw & \gate{\mathrm{H}} & \ctrl{2} & \qw & \qw & \qw & \ctrl{2} & \gate{\mathrm{H}} & \qw &\\
    \lstick{$\ket{x_3}$} \hdots \hdots & \gate{\mathrm{H}} & \ctrl{1} & \qw & \qw & \qw & \ctrl{1} & \gate{\mathrm{H}} & \qw & \qw & \qw & \qw & \qw & \qw & \qw & \qw &\\
    \lstick{$\ket{x_4}$} \hdots \hdots & \gate{\mathrm{H}} & \targ{} & \ctrl{1} & \qw & \ctrl{1} & \targ{} & \gate{\mathrm{H}} & \gate{\mathrm{H}} & \targ{} & \ctrl{2} & \qw & \ctrl{2} & \targ{} & \gate{\mathrm{H}} & \qw &\\
    \lstick{$\ket{x_5}$} \hdots \hdots & \gate{\mathrm{H}} & \qw & \targ{} \qw & \gate{R_\mathrm{z}(2\beta)} & \targ{} & \qw & \gate{\mathrm{H}} & \qw & \qw & \qw & \qw & \qw & \qw & \qw & \qw &\\
    \lstick{$\ket{x_6}$} \hdots \hdots & \qw & \qw & \qw & \qw & \qw & \qw & \qw & \gate{\mathrm{H}} & \qw & \targ{} & \gate{R_\mathrm{z}(2\beta)} & \targ{} & \qw & \gate{\mathrm{H}} & \qw &
\end{quantikz}
    \caption{Mixer $U_m(\beta)$ for the $[6,3,3]$ linear block code. We exclude the ancilla register for showing the mixer circuit as they has no part in it. Horizontal dots are used to represent the circuit continuation. $R_z(-\gamma)$ is the single qubit rotation gate about the $z$ axis, $R_z(2\beta) = e^{-i \beta \mathrm{Z}}$.}
    \label{fig:appendix_mixer}
\end{figure}

\newpage

\subsection{Circuit of the cost unitary for the [6,3,3] code}
\label{subsec:Circuit of the cost unitary}
Here we show the detail circuit of how we implement the cost unitary $U_f(\gamma)$. The cost unitary for the $[6,3,3]$ code is defined in Eq.\ref{eq:cost_unitary}. In Fig.\ref{fig:appendix_cost}, the connection between the codeword register $x$ and the ancilla register $r$ is shown.
\begin{figure}[h!]
\advance\leftskip-0.5cm
\begin{quantikz}[font = \tiny]
\lstick{$\ket{x_1}$} & \ctrl{6} & \qw & \ctrl{6} & \qw & \qw & \qw & \qw & \qw & \qw & \qw & \qw & \qw & \qw \hdots \hdots\\
\lstick{$\ket{x_2}$} & \qw & \qw & \qw & \ctrl{6} & \qw & \ctrl{6} & \qw & \qw & \qw & \qw & \qw & \qw & \qw \hdots \hdots\\
\lstick{$\ket{x_3}$} & \qw & \qw & \qw & \qw & \qw & \qw & \ctrl{6} & \qw & \ctrl{6} & \qw & \qw & \qw & \qw \hdots \hdots\\
\lstick{$\ket{x_4}$} & \qw & \qw & \qw & \qw & \qw & \qw & \qw & \qw & \qw & \ctrl{6} & \qw & \ctrl{6} & \qw \hdots \hdots\\
\lstick{$\ket{x_5}$} & \qw & \qw & \qw & \qw & \qw & \qw & \qw & \qw & \qw & \qw & \qw & \qw & \qw \hdots \hdots\\
\lstick{$\ket{x_6}$} & \qw & \qw & \qw & \qw & \qw & \qw & \qw & \qw & \qw & \qw & \qw & \qw & \qw \hdots \hdots\\
\lstick{$\ket{r_1}$} & \targ{} & \gate{R_\mathrm{z}(-\gamma)} & \targ{} & \qw & \qw & \qw & \qw & \qw & \qw & \qw & \qw & \qw & \qw \hdots \hdots\\
\lstick{$\ket{r_2}$} & \qw & \qw & \qw & \targ{} & \gate{R_\mathrm{z}(-\gamma)} & \targ{} & \qw & \qw & \qw & \qw & \qw & \qw & \qw \hdots \hdots\\
\lstick{$\ket{r_3}$} & \qw & \qw & \qw & \qw & \qw & \qw & \targ{} & \gate{R_\mathrm{z}(-\gamma)} & \targ{} & \qw & \qw & \qw & \qw \hdots \hdots\\
\lstick{$\ket{r_4}$} & \qw & \qw & \qw & \qw & \qw & \qw & \qw & \qw & \qw & \targ{} & \gate{R_\mathrm{z}(-\gamma)} & \targ{} & \qw \hdots \hdots\\
\lstick{$\ket{r_5}$} & \qw & \qw & \qw & \qw & \qw & \qw & \qw & \qw & \qw & \qw & \qw & \qw & \qw \hdots \hdots\\
\lstick{$\ket{r_6}$} & \qw & \qw & \qw & \qw & \qw & \qw & \qw & \qw & \qw & \qw & \qw & \qw & \qw \hdots \hdots\\
\lstick{$\ket{x_1}$} & \hdots \hdots & \qw & \qw & \qw & \qw & \qw & \qw & \qw &\\
\lstick{$\ket{x_2}$} & \hdots \hdots & \qw & \qw & \qw & \qw & \qw & \qw & \qw &\\
\lstick{$\ket{x_3}$} & \hdots \hdots & \qw & \qw & \qw & \qw & \qw & \qw & \qw &\\
\lstick{$\ket{x_4}$} & \hdots \hdots & \qw & \qw & \qw & \qw & \qw & \qw & \qw &\\
\lstick{$\ket{x_5}$} & \hdots \hdots & \ctrl{6} & \qw & \ctrl{6} & \qw & \qw & \qw & \qw &\\
\lstick{$\ket{x_6}$} & \hdots \hdots & \qw & \qw & \qw & \ctrl{6} & \qw & \ctrl{6} & \qw &\\
\lstick{$\ket{r_1}$} & \hdots \hdots & \qw & \qw & \qw & \qw & \qw & \qw & \qw &\\
\lstick{$\ket{r_2}$} & \hdots \hdots & \qw & \qw & \qw & \qw & \qw & \qw & \qw &\\
\lstick{$\ket{r_3}$} & \hdots \hdots & \qw & \qw & \qw & \qw & \qw & \qw & \qw &\\
\lstick{$\ket{r_4}$} & \hdots \hdots & \qw & \qw & \qw & \qw & \qw & \qw & \qw &\\
\lstick{$\ket{r_5}$} & \hdots \hdots & \targ{} & \gate{R_\mathrm{z}(-\gamma)} & \targ{} & \qw & \qw & \qw & \qw &\\
\lstick{$\ket{r_6}$} & \hdots \hdots & \qw & \qw & \qw & \targ{} & \gate{R_\mathrm{z}(-\gamma)} & \targ{} & \qw &\\
\end{quantikz}
    \caption{Cost unitary $U_f(\gamma)$ for the $[6,3,3]$ linear block code. Horizontal dots are used to represent the circuit continuation. $R_z(-\gamma)$ is the single qubit rotation gate about the $z$ axis, $R_z(-\gamma) = e^{i \frac{\gamma}{2} \mathrm{Z}}$.}
    \label{fig:appendix_cost}
\end{figure}

\newpage
\subsection{Success rate plots}
\label{ap:appendix4}
In Section \ref{sec:Performance Analysis of the UPO}, we perform an analysis on the performance of the UPO strategy.
Through numerical simulations, we argue that optimized solution is more achievable using UPO, rather than FPO.
Here, we demonstrate another set of results, which shows the promising performance of UPO.
We perform multiple simulations of the decoder and estimate an average success rate of the same.
From these results, it is very clear that UPO has better optimization capacity than FPO.
\begin{figure}[hbt!]
     \centering
     \begin{subfigure}[b]{0.49\textwidth}
         \centering
         \includegraphics[width = \textwidth]{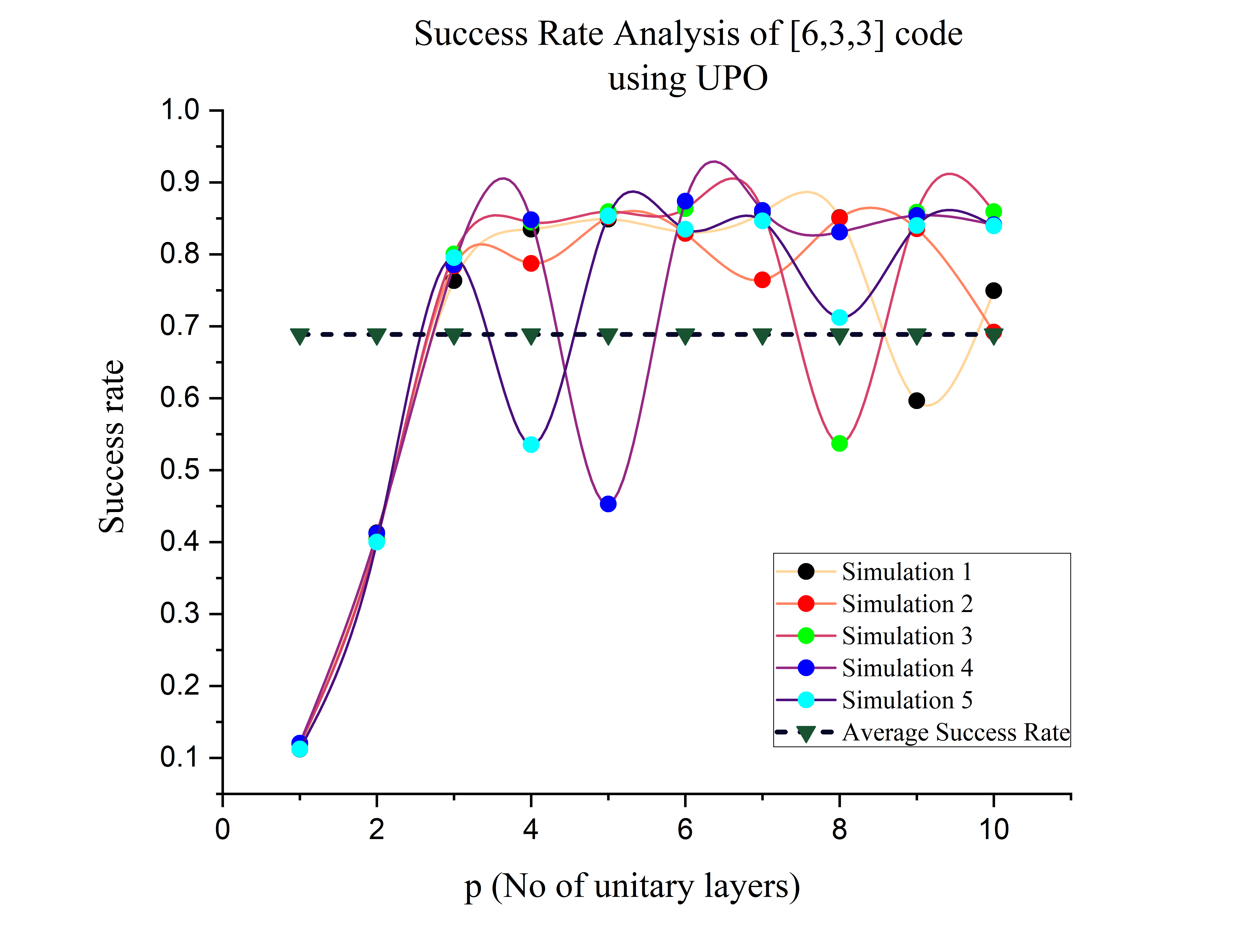}
         \caption{}
         \label{}
     \end{subfigure}
      \hfill
     \begin{subfigure}[b]{0.49\textwidth}
         \centering
         \includegraphics[width = \textwidth]{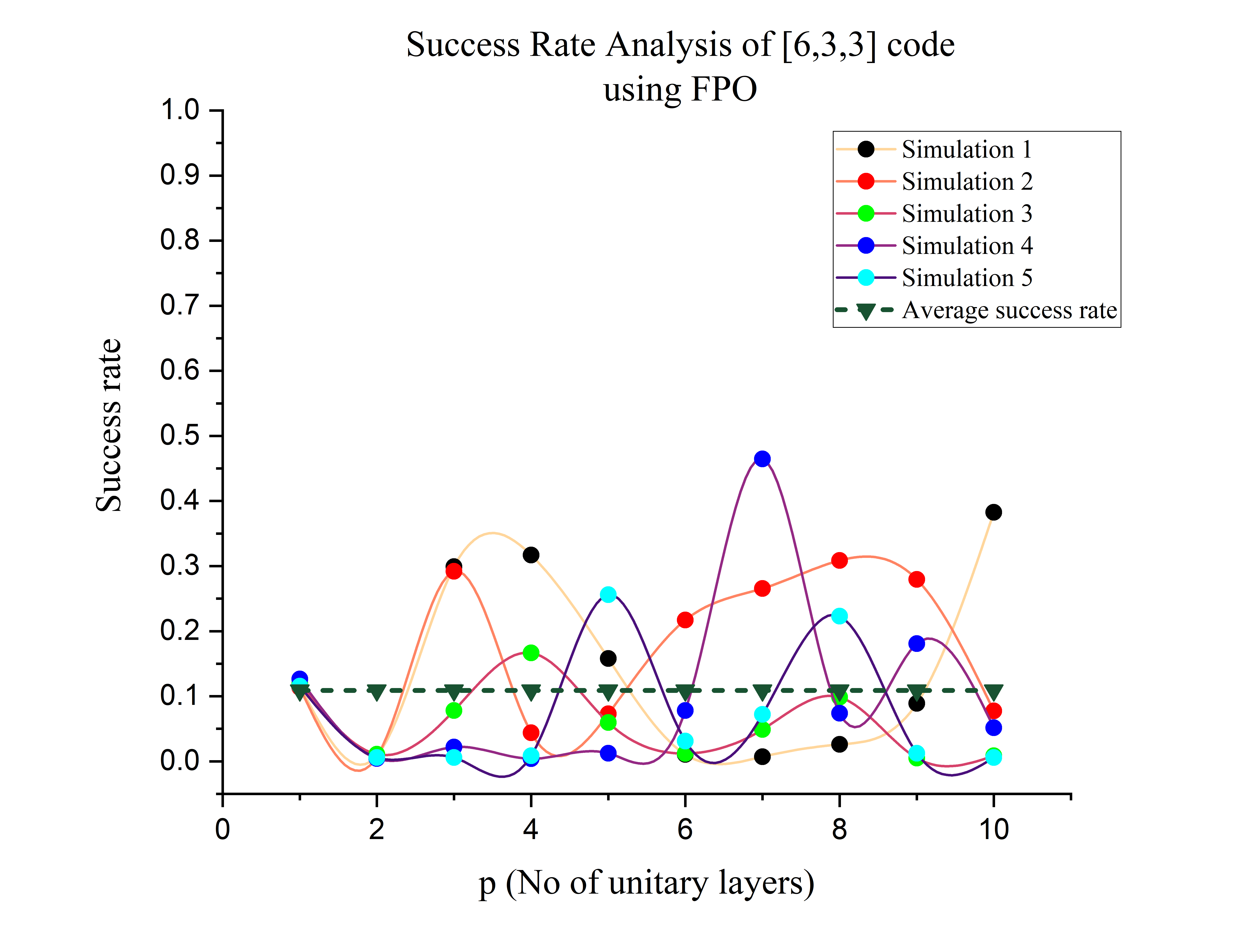}
         \caption{}
         \label{}
     \end{subfigure}
        % \vfill
     \begin{subfigure}[b]{0.49\textwidth}
         \centering
         \includegraphics[width = \textwidth]{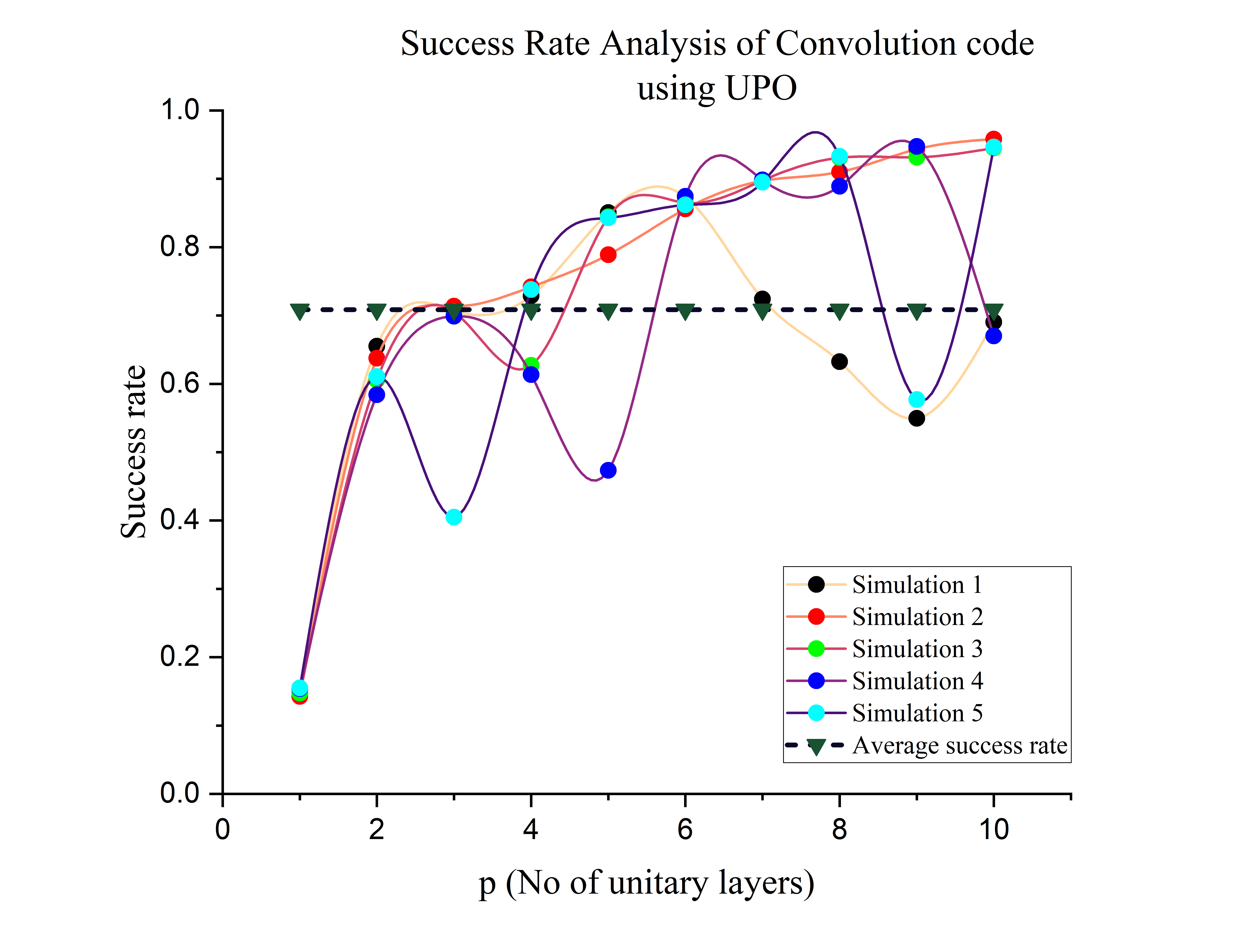}
         \caption{}
         \label{}
     \end{subfigure}
       \hfill
     \begin{subfigure}[b]{0.49\textwidth}
         \centering
         \includegraphics[width = \textwidth]{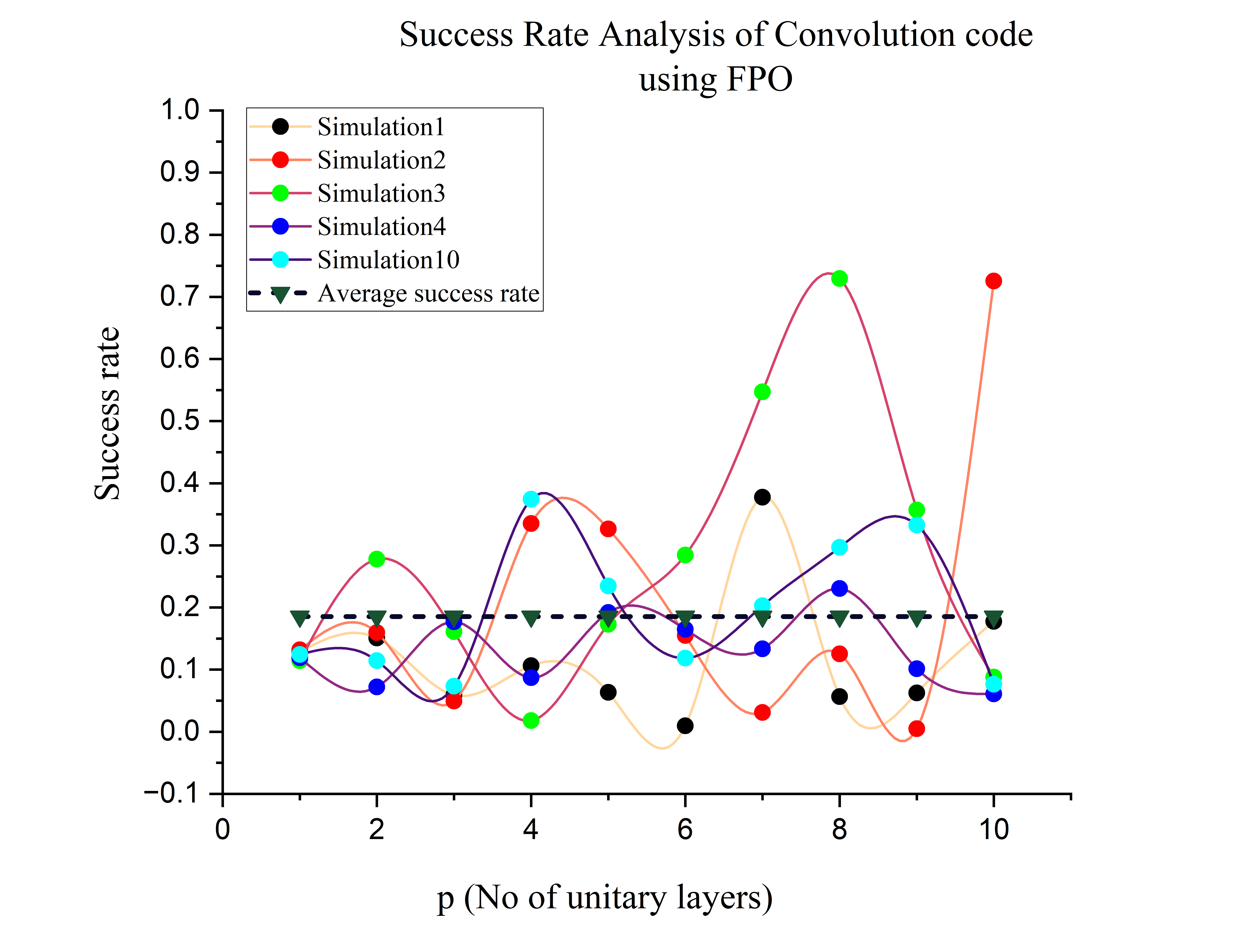}
         \caption{}
         \label{}
     \end{subfigure}
        \caption{Success rate plots.
        In (a) and (b) we show the success rate of the $[6,3,3]$ hybrid decoder with parameter optimization using UPO and FPO respectively.
        In (c) and (d) we show the same for a rate $\frac{1}{2}$ convolution code.
        For both the case, we observe that the average success rate is high for the case of uniformly optimized parameters.}
\end{figure}

\subsection{Application to Convolution Codes}
\label{ap:convolutioncodes}
% The same approach can be extended for the decoding of any convolution codes as below-
% \begin{itemize}
%     \item Initialize with any valid path or an superposition of different valid paths of the trellis of the encoder.
%     \item Construct the mixer unitary using the hamiltonian from theorem\ref{theorem_mixer}.
%     \item Construct the cost unitary using the same approach as in Eq.\ref{eq:cost_hamiltonian}.
%     \item Apply algorithm\ref{alg:algorithm1}, to find the parameters.
%     \item Use the optimized parameters and run the QAOA to obtain the decoded codeword or the minimum path in the trellis.
% \end{itemize}
In this appendix, we apply the hybrid Viterbi decoder from Section \ref{sec:Quantum Classical Hybrid Viterbi Decoder} and \ref{sec:UPO} to decode the classical convolution codes. 
Convolution codes does not have a block-like representation. 
The output codeword of a convolution code, depends on the current input and the memory elements of the encoder. 
The encoder here takes stream of input bits, produces the output codeword and updates the state of the encoder. 
The bits stored in the shift registers of the convolution encoder determine the state of the encoder. 
Shift registers are the memory elements, which are present in the circuit of every classical convolution code encoder.
The encoder takes a continuous stream of input bits and this memory elements store these information bits for the next sequence of input.
The number of past input bits stored, depends on the structure of the code and in general a $m$ bit shift register stores $m$ number of previous input bits.
Here, we use a rate $\frac{1}{2}$ code, with the number of message bits $k=1$ and the number of output codeword bits $n = 2$, at each time instant of the encoder. 
The number of shift registers present in the encoder is $m = 2$.
We let the encoder run for $5$ time instances. 
So, the codeword has a total of $10$ bits. 
At each time instant $t$, the input bit is defined as $u_t$. Then the state of the encoder we define as $S_t = (u_{t-1}\,\,u_{t-2})$,
where $u_{t-1}$ and $u_{t-2}$ are the input bits at time instants $t-1$ and $t-2$ respectively.
The trellis of this rate $\frac{1}{2}$ encoder is shown in Fig. \ref{fig:con_trellis}.
\begin{figure}
\centering
\begin{tikzpicture}[node distance={2.8 cm}, main/.style = {draw, circle,fill=cyan,thick}]

\node[main] (1){$\mathbf{00}$};

\node[main] (2)[right of=1]{$\mathbf{00}$};
\node[main] (3)[below of=2]{$\mathbf{01}$};
\node[main] (4)[below of=3]{$\mathbf{10}$};
\node[main] (5)[below of=4]{$\mathbf{11}$};

\node[main] (6)[right of=2]{$\mathbf{00}$};
\node[main] (7)[below of=6]{$\mathbf{01}$};
\node[main] (8)[below of=7]{$\mathbf{10}$};
\node[main] (9)[below of=8]{$\mathbf{11}$};

\node[main] (10)[right of=6]{$\mathbf{00}$};
\node[main] (11)[below of=10]{$\mathbf{01}$};
\node[main] (12)[below of=11]{$\mathbf{10}$};
\node[main] (13)[below of=12]{$\mathbf{11}$};

\node[main] (14)[right of=10]{$\mathbf{00}$};
\node[main] (15)[below of=14]{$\mathbf{01}$};
\node[main] (16)[below of=15]{$\mathbf{10}$};
\node[main] (17)[below of=16]{$\mathbf{11}$};

\node[main] (18)[right of=14]{$\mathbf{00}$};

\draw[->,very thick] (1) -- node[midway,above] {\textbf{00}} (2);
\draw[->,very thick] (1) -- node[midway,below,left = 0.15cm] {\textbf{11}} (4);
\draw[->,very thick] (2) -- node[midway,above] {\textbf{00}} (6);
\draw[->,very thick] (2) -- node[near start,above,right = 0.15cm] {\textbf{11}} (8);
\draw[->,very thick] (4) -- node[near start,midway,left = 0.15 cm] {\textbf{01}} (7);
\draw[->,very thick] (4) -- node[midway,left=0.25cm] {\textbf{10}} (9);
\draw[->,very thick] (6) -- node[midway,above] {\textbf{00}} (10);
\draw[->,very thick] (6) -- node[midway,right=0.2cm,above] {\textbf{11}} (12);
\draw[->,very thick] (7) -- node[near end,left=0.2cm] {\textbf{11}} (10);
\draw[->,very thick] (7) -- node[near start,right=0.1cm,above] {\textbf{00}} (12);
\draw[->,very thick] (8) -- node[near start,above,left=0.2cm] {\textbf{01}} (11);
\draw[->,very thick] (8) -- node[near end,right = 0.1cm,above] {\textbf{10}} (13);
\draw[->,very thick] (9) -- node[midway,above] {\textbf{01}} (13);
\draw[->,very thick] (9) -- node[near start,left=0.1cm] {\textbf{10}} (11);
\draw[->,very thick] (10) -- node[midway,above] {\textbf{00}} (14);
\draw[->,very thick] (11) -- node[midway,below,right=0.1 cm] {\textbf{11}} (14);
\draw[->,very thick] (12) -- node[midway,above,left=0.1cm] {\textbf{01}} (15);
\draw[->,very thick] (13) -- node[near start,below,right=0.1 cm] {\textbf{10}} (15);
\draw[->,very thick] (14) -- node[midway,above] {\textbf{00}} (18);
\draw[->,very thick] (15) -- node[midway,below,right=0.1cm] {\textbf{11}} (18);
\end{tikzpicture}
    \caption{Trellis of the rate $\frac{1}{2}$ convolution encoder \cite{lin}. Between initial and final state there are $5$ time instants. At each instant the input to the encoder is either $0$ or $1$. The branches are labeled according to the $2$ bit output at each time instant. There are two branches from each node due to two possible input at each instant.}
    \label{fig:con_trellis}
\end{figure}
\begin{figure}[ht!]
     \centering
     \begin{subfigure}[b]{0.48\textwidth}
         \centering
         \includegraphics[width = \textwidth]{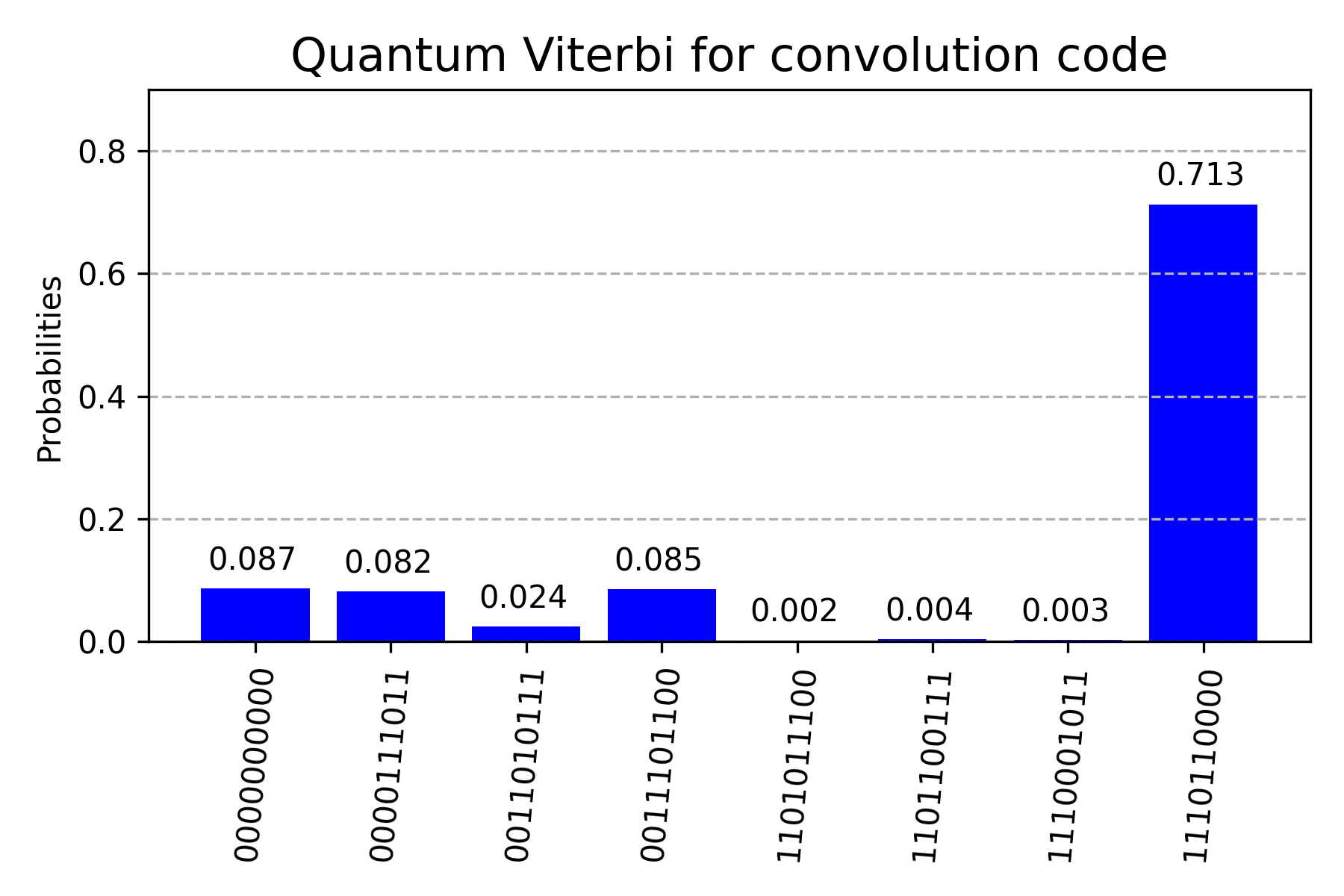}
         \caption{
         % Initial state is $\ket{0000000000}$. 
         The received erroneous vector is $r = 0001110111$. Hamming distance between simulation output state and $r$ is $1$.}
         \label{fig:convl_0}
     \end{subfigure}
      \hfill
     \begin{subfigure}[b]{0.48\textwidth}
         \centering
         \includegraphics[width = \textwidth]{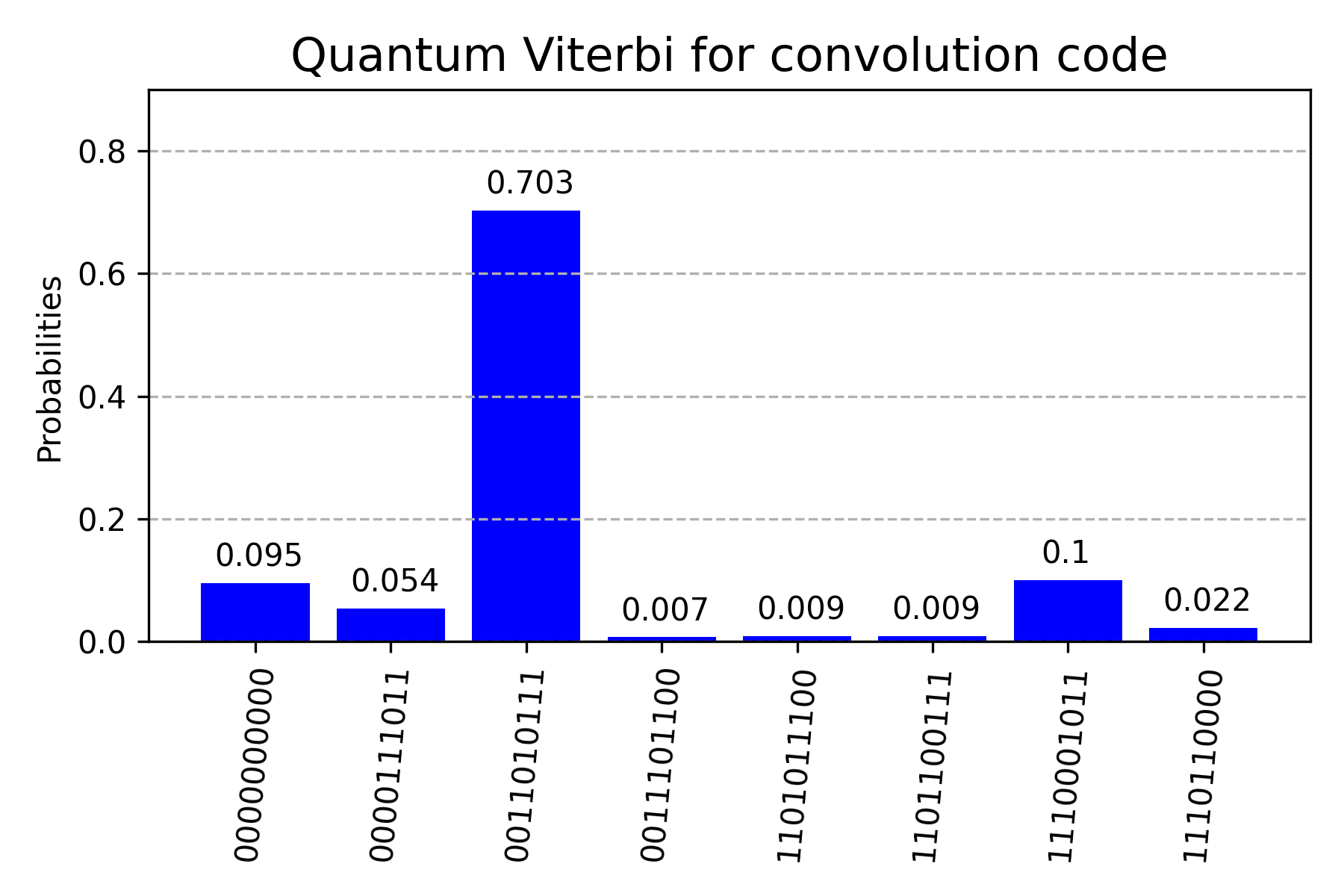}
         \caption{
         % Initial state is $\ket{1101110000}$.
         The received erroneous vector is $r =1110001100$. Hamming distance between simulation output state and $r$ is $1$.}
         \label{fig:convl_1}
     \end{subfigure}

        \caption{We show the results of the quantum Viterbi decoder for a rate $\frac{1}{2}$ convolution code, with $k = 1, n = 2, m = 2$, where $m$ is the number of memory elements. We label different states along the horizontal axis, with the convention that the most significant bit is at the bottom. The number of QAOA unitary layers applied is $p = 3$. 
        % In Fig. \ref{fig:convl_0} we start with a valid initial state $\ket{0}^{\otimes 10}$. 
        % In Fig. \ref{fig:convl_1} the initial state is $\ket{1101110000}$.
        }
        \label{fig:convolution}
\end{figure}
The quantum Viterbi decoder uses UPO, mentioned in Algorithm \ref{alg:algorithm1}. 
The cost unitary is exactly the same as in Eq. \ref{eq:cost_unitary}. 
We note that $n$, the total number of codeword bits is $10$. 
The mixer Hamiltonian is constructed using the Theorem \ref{theorem_mixer}. 
The codespace $\mathcal{C}$ has $8$ members, corresponding to $8$ valid paths shown in the trellis Fig. \ref{fig:con_trellis}. 
The codespace is
\begin{align*}
    \mathcal{C} = \{&0000000000,0000110111,0011011100,0011101011,\\&1101110000,1101000111,1110101100,1110011011\}.
\end{align*}
The minimum Hamming weight of the code is $d = 5$. 
Using Theorem \ref{theorem_mixer}, we state the mixer Hamiltonian is
\begin{align}
     H_m = \mathrm{X}_1\mathrm{X}_2\mathrm{X}_4\mathrm{X}_5\mathrm{X}_6 + \mathrm{X}_3\mathrm{X}_4\mathrm{X}_6\mathrm{X}_7\mathrm{X}_8 + \mathrm{X}_5\mathrm{X}_6\mathrm{X}_8\mathrm{X}_9\mathrm{X}_{10}.
    \label{eq:mixer_con}
\end{align}
The corresponding mixer unitary is $U_m = e^{-i \beta H_m}$, where $H_m$ is used from Eq. \ref{eq:mixer_con}.
We show the results of simulations in Fig. \ref{fig:convolution} for different received vectors to determine the least paths on the trellis in Fig. 
\ref{fig:con_trellis}. 
The results of Fig. \ref{fig:convolution} show the feasibility and flexibility of the algorithm for different classical codes.
\end{document}